\definecolor{clemson-orange}{RGB}{234,106,32}
\definecolor{chicago-maroon}{RGB}{128,0,0}
\definecolor{cincinnati-red}{RGB}{190,0,0}
\definecolor{soft-cyan}{RGB}{68,85,90}
\newcommand{\removed}[1]{}
\newcommand{\bb}{\mathbb}
\newcommand{\R}{\bb R}
\newcommand{\N}{{\bb N}}
\theoremstyle{definition}
\newtheorem{theorem}{Theorem}[section]
\newtheorem{lemma}[theorem]{Lemma}
\newtheorem{prop}[theorem]{Proposition}
\newtheorem{claim}{Claim}
\newtheorem{definition}{Definition}
\def\ve#1{\mathchoice{\mbox{\boldmath$\displaystyle\bf#1$}}
{\mbox{\boldmath$\textstyle\bf#1$}}
{\mbox{\boldmath$\scriptstyle\bf#1$}}
{\mbox{\boldmath$\scriptscriptstyle\bf#1$}}}
\newcommand{\x}{{\ve x}}
\newcommand{\y}{{\ve y}}
\newcommand{\z}{{\ve z}}
\renewcommand{\a}{{\ve a}}
\newcommand{\w}{{\ve w}}
\newcommand{\poly}{\textrm{poly}}
\newif\ifsolutions \solutionstrue
\title {Lower bounds over Boolean inputs for deep neural networks with ReLU gates.}
\author{ Anirbit Mukherjee\thanks{Department of Applied Mathematics and Statistics, Johns Hopkins University, Email: \tt{amukhe14@jhu.edu}}\ \ \ \ \ \ \ \ 
Amitabh Basu\thanks{Department of Applied Mathematics and Statistics, Johns Hopkins University, Email: \tt{basu.amitabh@jhu.edu}}
}
\date{}
\begin{document}

\maketitle

\begin{abstract}
~\\
Motivated by the resurgence of neural networks in being able to solve complex learning tasks we undertake a study of high depth networks using ReLU gates which implement the function $x \mapsto \max\{0,x\}$. We try to understand the role of depth in such neural networks by showing size lowerbounds against such network architectures in parameter regimes hitherto unexplored. In particular we show the following two main results about neural nets computing Boolean functions of input dimension $n$, 
\begin{itemize}
\item We use the method of random restrictions to show almost linear, $\Omega(\epsilon^{2(1-\delta)}n^{1-\delta})$, lower bound for completely weight unrestricted LTF-of-ReLU circuits to match the Andreev function on at least $\frac{1}{2} +\epsilon$ fraction of the inputs for $\epsilon > \sqrt{2\frac{\log^{\frac {2}{2-\delta}}(n)}{n}}$ for any $\delta \in (0,\frac 1 2)$
\item We use the method of sign-rank to show exponential in dimension lower bounds for ReLU circuits ending in a LTF gate and of depths upto $O(n^{\xi})$ with $\xi < \frac{1}{8}$ with some restrictions on the weights in the bottom most layer. All other weights in these circuits are kept unrestricted. This in turns also implies the same lowerbounds for LTF circuits with the same architecture and the same weight restrictions on their bottom most layer. 
\end{itemize}
Along the way we also show that there exists a $\mathbb{R}^ n\rightarrow \mathbb{R}$ Sum-of-ReLU-of-ReLU function which Sum-of-ReLU neural nets can never represent no matter how large they are allowed to be. 
\end{abstract}
\section{Introduction}

There has been a recent surge of activity in using neural networks for  complex artificial intelligence tasks (like this very recent spectacular demonstration ~\cite{silver2017mastering} of the power of neural nets). This has rekindled interest in understanding neural networks from a complexity theory perspective. A myriad of hard mathematical questions have surfaced in the attempts to rigorously explain the power of neural networks and a comprehensive overview of these can be found in this recent three part series of articles from The Center for Brains, Minds and Machines (CBMM), ~\cite{poggio2017and,poggio2017theory,zhang2017theory}.
 There is a rich literature investigating the complexity of the function classes represented by neural networks with various kinds of gates (or ``activation functions" which is the more common parlance in machine learning). Many papers, a canonical example being the classic paper by Maass~\cite{maass1997bounds}, establish complexity results for the entire class of functions represented by circuits where the gates can come from a very general family. This is complemented by papers that study a very specific family of gates such as the sigmoid gate or the LTF gate ~\cite{impagliazzo1997size}, ~\cite{siu1994rational}, ~\cite{sherstov2007powering} ~\cite{krause1994computational}, ~\cite{buhrman2007computation}, ~\cite{sherstov2009separating}, ~\cite{razborov2010sign}, ~\cite{bun2016improved}. Many associated results can also be found in these reviews ~\cite{lee2009lower, razborov1992small}. Recent circuit complexity results in~\cite{kane2016super},  ~\cite{tamaki2016satisfiability}, ~\cite{chen2016average}, ~\cite{kabanets2017polynomial} stand out as significant improvements over known lower (and upper) bounds on circuit complexity with threshold gates. The results of Maass~\cite{maass1997bounds} also show that very general families of neural networks can be converted into circuits with only LTF gates with at most a constant factor blow up in depth and polynomial blow up in size of the circuits.
~\\ \\
In the last 5 years or so, a particular family of gates called the {\em Rectified Linear Unit (ReLU)} gates have been reported to have significant advantages over more traditional gates in practical applications of neural networks. Such a gate with $n$ real inputs computes the following output, 
\begin{align}\label{eq:RELU-def}
\R^n &\rightarrow \R \\ 
\x &\mapsto \max \{0,b+\langle \w,\x \rangle\}
\end{align} 
~\\ 
where $\w \in \R^n$ and $b \in \R$ are fixed parameters associated with the gate ($b$ is called the bias of the gate). In comparison, the $\pm 1$ valued LTF gate mentioned above computes (for the same weights as above) the function, $(2\mathbf{1}_{(b+\langle \w,\x \rangle \geq 0)} -1)$ where $\mathbf{1}_{(b+\langle \w,\x \rangle \geq 0)}$ is the $0/1$ indicator function for the stated halfspace condition. 
~\\ \\
Some of the  prior results which apply to general gates, such as the ones in~\cite{maass1997bounds}, also apply to ReLU gates, because those results apply to gates that compute a piecewise polynomial function (ReLU is a piecewise linear function with only two pieces). However, as witnessed by results on LTF gates, one can usually make much stronger claims about specific classes of gates. To the best of our knowledge, no prior results have been obtained for ReLU gates from the perspective of {\em Boolean} complexity theory, i.e., the study of such circuits when restricted to Boolean inputs. The main focus of this work is to study circuits computing Boolean functions mapping $\{-1,1\}^m \rightarrow \{-1,1\}$ which use ReLU gates in their intermediate layers, and have an LTF gate at the output node (to ensure that the output is in $\{-1,1\}$). We remark that using an LTF gate at the output node while allowing more general analog gates in the intermediate nodes is a standard practice when studying the Boolean complexity of analog gates (see, for example,~\cite{maass1997bounds}).
~\\ \\
Although we are not aware of an analysis of lower bounds for ReLU circuits when applied to only Boolean inputs, there has been recent work on the analysis of such circuits when viewed as a function from $\R^n$ to $\R$ (i.e., allowing real inputs and output). From ~\cite{eldan2016power} and  ~\cite{daniely2017depth} (with restrictions on the domain and the weights) we know of (super-)exponential lowerbounds on the size of Sum-of-ReLU circuits for certain easy Sum-of-ReLU-of-ReLU functions . 
Depth v/s size tradeoffs 
for such circuits have recently also been studied in ~\cite{telgarsky2016benefits,hanin2017universal,liang2016deep, yarotsky2016error,safran2016depth} and in a recent paper ~\cite{arora2016understanding} by the current authors. To the best of our knowledge no lowerbounds scaling exponentially with the dimension are known for analog deep neural networks of depths more than $2$.
~\\ \\
In what follows, the {\em depth} of a circuit will be the length of the longest path from the output node to an input variable, and the {\em size} of a circuit will be the total number of gates in the circuit. We will also use the notation {\em Sum-of-ReLU} to refer to circuits whose inputs feed into a single layer of ReLU gates, whose outputs are combined into a weighted sum to give the final output. Similarly, Sum-of-ReLU-of-ReLU denotes the circuit with depth 3, where the output node is a simple weighted sum, and the intermediate gates are all ReLU gates in the two ``hidden" layers. We analogously define Sum-of-LTF, LTF-of-LTF, LTF-of-ReLU, LTF-of-LTF-of-LTF, LTF-of-ReLU-of-ReLU and so on. We will also use the notation LTF-of-(ReLU)$^k$ for a circuit of the form LTF-of-ReLU-of-RELU-$\ldots$-ReLU with $k\geq 1$ levels of ReLU gates. 

\section{Statement and discussion of results} 

\paragraph{Boolean v/s real inputs.} We begin our study with the following observation which shows that ReLU circuits have markedly different behaviour when the inputs are restricted to be Boolean, as opposed to arbitrary real inputs. Since AND and OR gates can both be implemented by ReLU gates, it follows that {\em any} Boolean function can be implemented by a ReLU-of-ReLU circuit. In fact, it is not hard to show something slightly stronger:

\begin{lemma}
Any function $f : \{-1,1\}^n \rightarrow \mathbb{R}$ can be implemented by a Sum-of-ReLU circuit using at most $\min\{2^n,\sum_{\hat{f}(S) \neq 0} \vert S \vert\}$ number of ReLU gates, where $\hat f(S)$ denotes the Fourier coefficient of $f$ for the set $S \subseteq \{1, \ldots, n\}$.
\end{lemma}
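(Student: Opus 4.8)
The plan is to prove the two upper bounds $2^n$ and $\sum_{\hat f(S)\neq 0}|S|$ by two independent explicit constructions and keep whichever circuit is smaller; throughout, the output node's weighted sum is taken to carry an additive constant (which is exactly what makes the $|\emptyset|=0$ term consistent, since a nonzero constant function cannot be a weighted sum of honest ReLU gates using no gates). For the $2^n$ bound I would use one ReLU gate per point of the cube. For a fixed $a\in\{-1,1\}^n$, on the cube $\langle a,x\rangle$ equals $n$ minus twice the Hamming distance between $a$ and $x$, so the gate $x\mapsto\max\{0,\langle a,x\rangle-(n-2)\}$ outputs $2$ when $x=a$ and $0$ otherwise; hence $\tfrac12\max\{0,\langle a,x\rangle-(n-2)\}$ is the $0/1$ indicator of $\{x=a\}$, and $f=\sum_{a}f(a)\cdot\tfrac12\max\{0,\langle a,x\rangle-(n-2)\}$ reproduces $f$ using at most $2^n$ ReLU gates (only the $a$ with $f(a)\neq0$ are actually needed).

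For the Fourier bound I would expand $f=\sum_S\hat f(S)\chi_S$ with $\chi_S(x)=\prod_{i\in S}x_i$, route the $\hat f(\emptyset)$ term into the output constant, and build a separate gadget for each non-empty $S$ with $\hat f(S)\neq0$, all $|S|$ of whose gates share the weight vector $\mathbf{1}_S$. The key observation is that $\chi_S$ depends on $x$ only through $t:=\sum_{i\in S}x_i$, which ranges over the $(k+1)$-point equally spaced grid $\{-k,-k+2,\dots,k\}$ with $k=|S|$, and on this grid $\chi_S=g(t)$ where $g$ simply alternates between $+1$ and $-1$. So it suffices to realize any prescribed function on an arithmetic progression of $k+1$ points as a constant plus a linear combination of $k$ univariate ReLUs $t\mapsto\max\{0,t-\theta_j\}$: put $\theta_j$ in the $j$-th gap of the grid, fix the constant to be the value of $g$ at the leftmost grid point, and note that the gate with threshold in gap $j$ is nonzero on exactly the last $k-j+1$ grid points, so the remaining $k$ interpolation conditions form a triangular linear system (with nonzero diagonal) in the $k$ coefficients, hence are uniquely solvable. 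Summing these gadgets over all $S$ with $\hat f(S)\neq0$ — their constants collapsing into the single output constant — uses exactly $\sum_{\hat f(S)\neq0}|S|$ ReLU gates, and taking the cheaper of the two constructions completes the proof.

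The only point that needs genuine care rather than routine verification is the ``$k$ gates, not $k+1$'' bookkeeping in the Fourier construction: any combination of $k$ one-sided ReLUs whose thresholds all lie above the leftmost grid point must vanish there, so the output's additive constant really does have to absorb the value of $\chi_S$ at that point — this is precisely what keeps the count tight and aligned with $|\emptyset|=0$. Everything else — the indicator identity on the cube, the reduction of $\chi_S$ to a one-dimensional interpolation problem in $t$, and solving the triangular system — is straightforward.
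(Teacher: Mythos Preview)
Your proof is correct and follows essentially the same approach as the paper: one ReLU indicator per hypercube vertex for the $2^n$ bound, and the Fourier expansion together with a univariate $k$-gate realization of each $\chi_S$ (parity on $|S|=k$ variables) for the second bound. Your treatment is in fact more explicit than the paper's sketch --- in particular, your careful bookkeeping with the output constant is exactly what pins the count at $k$ gates per character rather than $k+1$.
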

~\\ 
The Lemma follows by observing that the indicator functions of each vertex of the Boolean hypercube $\{-1,1\}^n$ can be implemented by a single ReLU gate, and the parity function on $k$ variables can be implemented by $k$ ReLU gates (see Appendix~\ref{sec:Parity-ReLU}). Thus, if one does not restrict the size of the circuit, then Sum-of-ReLU circuits can represent any pseudo-Boolean function. In contrast, we will now show that if one allows real inputs, then there exist functions with just 2 inputs (i.e., $n=2$) which cannot be represented by any Sum-of-ReLU circuit, no matter how large.

\begin{prop}\label{thm:max-0-x-y} The function $\max\{0,x_1,x_2\}$ cannot be computed by any Sum-of-ReLU circuit, no matter how many ReLU gates are used. It can be computed by a Sum-of-ReLU-of-ReLU circuit.
\end{prop}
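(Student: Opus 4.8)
The plan is to prove the impossibility half by locating, for an \emph{arbitrary} Sum-of-ReLU function, exactly which points of a given line it can fail to be differentiable at, and to prove the positive half by exhibiting an explicit depth-$3$ circuit. The technical heart is the following structural lemma: if $g(\x)=\sum_{i=1}^{N}c_i\max\{0,b_i+\langle\w_i,\x\rangle\}$ on $\R^2$ with every $\w_i\neq\0$, then for every line $\ell\subseteq\R^2$ the set of points of $\ell$ at which $g$ is not differentiable is either all but finitely many points of $\ell$, or only finitely many points of $\ell$. To prove it, I would fix $\ell$, write it as $\{\x:\langle\a,\x\rangle=s_0\}$ with $\a\neq\0$, set $u(\x)=\langle\a,\x\rangle-s_0$ and $\ell_i=\{\x:b_i+\langle\w_i,\x\rangle=0\}$, and let $I_\ell=\{i:\ell_i=\ell\}$. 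For $i\in I_\ell$ one has $b_i+\langle\w_i,\x\rangle=\lambda_i u(\x)$ for some scalar $\lambda_i\neq0$, and since $\max\{0,t\}=\tfrac12(|t|+t)$ we get $\max\{0,\lambda_i u(\x)\}=\tfrac{|\lambda_i|}{2}|u(\x)|+\tfrac{\lambda_i}{2}u(\x)$; summing, $\sum_{i\in I_\ell}c_i\max\{0,b_i+\langle\w_i,\x\rangle\}=\beta\,|u(\x)|+(\text{affine in }\x)$ with the \emph{single} scalar $\beta=\tfrac12\sum_{i\in I_\ell}c_i|\lambda_i|$. Since each $\ell_j$ with $j\notin I_\ell$ is a line distinct from $\ell$, the set $P=\ell\cap\bigcup_{j\notin I_\ell}\ell_j$ is finite, and every $p\in\ell\setminus P$ has a neighbourhood on which all terms with $j\notin I_\ell$ are affine; hence on such a neighbourhood $g=\beta\,|u|+(\text{affine})$. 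As $u(p)=0$ and $u$ is a non-constant affine function, $g$ is differentiable at $p$ precisely when $\beta=0$. Thus if $\beta\neq0$ then $g$ is non-differentiable at every point of $\ell\setminus P$, and if $\beta=0$ then $g$ is differentiable at every point of $\ell\setminus P$; either way the lemma holds. I expect this lemma to be the main obstacle, and the one delicate point it has to survive — several ReLU summands sharing one kink line and partially cancelling — is precisely what the collapse of those summands into the single term $\beta\,|u|$ dispatches.

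Given the lemma, the impossibility is immediate: apply it to $g=\max\{0,x_1,x_2\}$ with $\ell=\{x_1=0\}$. Near $(0,t)$ with $t<0$ the function coincides with $\max\{0,x_1\}$, which is not differentiable there, whereas near $(0,t)$ with $t>0$ it coincides with $x_2$, which is differentiable. Hence inside $\ell$ the non-differentiability set of $g$ contains infinitely many points and omits infinitely many points, i.e.\ it is neither cofinite nor finite in $\ell$, contradicting the lemma. Therefore $\max\{0,x_1,x_2\}$ is computed by no Sum-of-ReLU circuit, no matter how many gates are used.

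For the realizability half I would verify the identity
\[
\max\{0,x_1,x_2\}=\max\{0,\max\{0,x_1\}\}+\max\bigl\{0,\ \max\{0,x_2\}-\max\{0,x_1\}\bigr\},
\]
which follows from $\max\{0,x_1\}\geq0$ (so the first summand is $\max\{0,x_1\}$), the identity $\max\{a,b\}=a+\max\{0,b-a\}$ applied with $a=\max\{0,x_1\}$ and $b=\max\{0,x_2\}$, and $\max\{\max\{0,x_1\},\max\{0,x_2\}\}=\max\{0,x_1,x_2\}$. The right-hand side is computed by the Sum-of-ReLU-of-ReLU circuit whose first hidden layer has the ReLU gates $x_1\mapsto\max\{0,x_1\}$ and $x_2\mapsto\max\{0,x_2\}$, whose second hidden layer has the two ReLU gates computing $\max\{0,\max\{0,x_1\}\}$ and $\max\{0,\max\{0,x_2\}-\max\{0,x_1\}\}$, and whose output node returns the (unweighted) sum of the two second-layer gates. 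This settles the second assertion.
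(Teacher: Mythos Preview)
Your proof is correct and rests on the same idea as the paper's: compare the non-differentiability locus of $\max\{0,x_1,x_2\}$ (three half-lines meeting at the origin) with that of an arbitrary Sum-of-ReLU function. The paper simply asserts that the non-differentiability set of $\sum_i c_i\max\{0,\langle a^i,x\rangle+b_i\}$ is ``precisely the union of the $w$ lines $\langle a^i,x\rangle+b_i=0$'' and observes that a union of full lines cannot equal a union of three rays. Your route is the same in spirit but more careful: the paper's ``precisely'' is literally false (terms sharing a kink line can cancel, e.g.\ $\max\{0,x_1\}+\max\{0,-x_1\}-|x_1|$ has empty singular set), and your lemma with the collapsed coefficient $\beta=\tfrac12\sum_{i\in I_\ell}c_i|\lambda_i|$ is exactly what is needed to repair this, yielding the cofinite/finite dichotomy on each line that drives the contradiction. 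For the second assertion the paper merely invokes a general depth bound from prior work, whereas you supply an explicit four-gate Sum-of-ReLU-of-ReLU realization; your identity checks out.
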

~\\ 
The first part of the above proposition (the impossibility result) is proved in Appendix~\ref{sec:proof-max-0-x-y}. The second part follows from Corollary $2.2$ of a previous paper by the authors ~\cite{arora2016understanding}, which states that any $\R^n \to \R$ function that can be implemented by a circuit of ReLU gates, can always be implemented with at most $\lceil \log(n+1) \rceil$ layers of ReLU gates (with a weighted Sum to give the final output). 

\paragraph{Restricting to Boolean inputs.} From this point on, we will focus entirely on the situation where the inputs to the circuits are restricted to $\{-1,1\}$. One motivation behind our results is the desire to understand the strength of the ReLU gates vis-a-vis LTF gates. It is not hard to see that any circuit with LTF gates can be simulated by a circuit with ReLU gates with at most a constant blow-up in size (because a single LTF gate can be simulated by 2 ReLU gates when the inputs are a discrete set -- see Appendix~\ref{sec:LTF-ReLU}). The question is whether ReLU gates can do significantly better than LTF gates in terms of depth and/or size.
~\\ \\
A quick observation is that Sum-of-ReLU circuits can be linearly (in the dimension $n$) smaller than Sum-of-LTF circuits. More precisely, 

\begin{prop} The function $f:\{-1,1\}^n \to \R$ given by $f(x) =\sum_{i=1}^n 2^i \big(\frac{1+x_i}{2}\big)$ can be implemented by a Sum-of-ReLU circuit with 2 ReLU gates, and any Sum-of-LTF that implements $f$ needs $\Omega(n)$ gates.
\end{prop}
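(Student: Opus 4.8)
The plan is to treat the two halves separately: the realizability claim is a one-line algebraic observation about an affine function being nonnegative on the cube, while the Sum-of-LTF lower bound is a counting argument comparing the number of distinct values a Sum-of-LTF circuit can attain against the number of distinct values of $f$.

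For the upper bound I would first rewrite $f$ in affine form. Since $\frac{1+x_i}{2}\in\{0,1\}$ for $x\in\{-1,1\}^n$, we have $f(x)=\sum_{i=1}^n 2^{i-1}(1+x_i)=(2^n-1)+\ell(x)$ where $\ell(x):=\sum_{i=1}^n 2^{i-1}x_i$. On the Boolean cube $\ell(x)\ge -\sum_{i=1}^n 2^{i-1}=-(2^n-1)$, so the affine function $(2^n-1)+\ell(x)$ is nonnegative at every vertex, and therefore agrees on $\{-1,1\}^n$ with $\max\{0,\,(2^n-1)+\ell(x)\}$ — that is, with a single ReLU gate having weight vector $(2^0,2^1,\dots,2^{n-1})$ and bias $2^n-1$, fed into the output sum with coefficient $1$. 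This already realizes $f$, and a fortiori it is realized with at most two ReLU gates as claimed.

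For the lower bound the key observation is that $f$ is injective on $\{-1,1\}^n$: the assignment $x\mapsto\{i: x_i=1\}$ is a bijection onto subsets of $[n]$, and $f(x)=\sum_{i:\,x_i=1}2^i$, so distinct inputs give distinct values by uniqueness of binary expansions; hence $f$ attains exactly $2^n$ distinct values on the cube. Now suppose a Sum-of-LTF circuit $g(x)=\sum_{j=1}^m a_j\,\mathrm{sign}(\langle w_j,x\rangle+b_j)$, with each LTF gate $\pm1$-valued as in the paper, computes $f$. The value $g(x)$ depends only on the joint sign pattern $\big(\mathrm{sign}(\langle w_1,x\rangle+b_1),\dots,\mathrm{sign}(\langle w_m,x\rangle+b_m)\big)\in\{-1,1\}^m$, so $g$ takes at most $2^m$ distinct values on $\{-1,1\}^n$. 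Equating $g=f$ forces $2^m\ge 2^n$, i.e. $m\ge n=\Omega(n)$; an additive constant at the output node, or $0/1$-valued threshold gates, does not change this count.

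I do not expect a genuine obstacle here; the only points needing care are (i) checking that $(2^n-1)+\ell(x)$ is nonnegative at \emph{every} vertex so that the ReLU acts as the identity on the cube, and (ii) confirming that $f$ really does attain all $2^n$ values, since the lower bound rests entirely on this — any function computed by fewer than $n$ LTF gates attains at most $2^{n-1}<2^n$ values and so cannot equal $f$.
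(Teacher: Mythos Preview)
Your argument is correct and follows the same approach as the paper: the lower bound is exactly the paper's counting argument ($m$ LTF gates yield at most $2^m$ output values, while $f$ is injective), and for the upper bound both you and the paper use that $f$ is affine on the cube. Your observation that $f$ is in fact nonnegative on $\{-1,1\}^n$, so a \emph{single} ReLU gate already realizes it, is a small sharpening of the paper's generic ``any linear function needs at most two ReLUs'' remark.
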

~\\
The above result follows from the following two facts: 1) any linear function is implementable by 2 ReLU gates, and 2) any Sum-of-LTF circuit with $w$ LTF gates gives a piecewise constant function that takes at most $2^w$ different values. Since $f$ takes $2^n$ different values (it evaluates every vertex of the Boolean hypercube to the corresponding natural number expressed in binary), we need $w \geq n$ gates.
~\\ \\
In the context of these preliminary results, we now state our main contributions. 
For the next result we recall the definition of the Andreev function ~\cite{andreev1987one} which has previously many times been used to prove computational lower bounds ~\cite{paterson1993shrinkage,impagliazzo1988decision,impagliazzo2012pseudorandomness}.

\begin{definition}[{\bf Andreev's function}]\label{Andreev}
The Andreev's function is the following mapping, 

\begin{align*}
A_n : \{0,1\}^{\lfloor \frac {n}{2} \rfloor} \times \{0,1\}^{ \lfloor \log (\frac{n}{2} ) \rfloor \times \lfloor \frac{n}{2\lfloor \log (\frac {n}{2}) \rfloor} \rfloor} &\longrightarrow \{0,1\} \\
(\x, [a_{ij}]) &\longmapsto x_{\text{bin} \left( \{ (\sum_{j=1}^{\lfloor \frac{n}{2 \lfloor \log (\frac {n}{2}) \rfloor} \rfloor} a_{ij}) \mod 2 \}_{i=1,2,..,\lfloor \log (\frac {n}{2} ) \rfloor } \right )}
\end{align*}
where ``bin" is the function that gives the decimal number that can be represented by its input bit string.
\end{definition}

~\\
We are particularly inspired by the most recent use of the Andreev function by Kane and Williams ~\cite{kane2016super} to get the first super linear lower bounds for approximating it using LTF-of-LTF circuits. We will give an almost linear lower bound on the size of LTF-of-ReLU circuits approximating this Andreev function with no restriction on the weights $\w, b$ for each gate.

\begin{theorem}\label{thm:andreev-LTF-ReLU}
For any $\delta \in (0,\frac{1}{2})$, there exists $N(\delta) \in \N$ such that for all $n \geq N(\delta)$ and $\epsilon > \sqrt{\frac{2\log^{\frac 2 {2-\delta}}(n)}{n}}$, any LFT-of-ReLU circuit on $n$ bits that matches the Andreev function on $n-$bits for at least $1/2 + \epsilon$ fraction of the inputs, has size $\Omega(\epsilon^{2(1-\delta)}n^{1-\delta})$. 
\end{theorem}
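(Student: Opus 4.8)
The plan is to use the method of random restrictions, adapting the approach of Kane–Williams for LTF-of-LTF circuits to the setting of LTF-of-ReLU circuits. The starting point is the classical observation underlying the Andreev function: if we hit the input with a random restriction that leaves roughly one live variable in each of the $\lfloor \log(n/2) \rfloor$ parity blocks, then with high probability the "addressing" part of the function still lets us set the parity bits to an arbitrary target string, and hence the restricted Andreev function contains an arbitrary function of $\lfloor \log(n/2)\rfloor$ bits on the $x$-variables; in particular it contains $\mathrm{Parity}$ on $\Theta(\log n)$ bits (or, more carefully, a function of high "average sensitivity" / no good low-degree-like approximation). So if the LTF-of-ReLU circuit $C$ computes $A_n$ on a $1/2+\epsilon$ fraction of inputs, then after a suitable random restriction $\rho$, the restricted circuit $C\restriction_\rho$ still agrees with this hard function on a $1/2 + \Omega(\epsilon)$ fraction of its (few) remaining inputs, with constant probability over $\rho$.

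The second ingredient is a "shrinkage under random restrictions" statement tailored to ReLU gates: a single ReLU gate $\x \mapsto \max\{0, b + \langle \w, \x\rangle\}$, when we randomly fix all but a $p$-fraction of the coordinates, becomes — with good probability — a function that depends on very few variables, or more precisely one whose behaviour on the surviving cube is simple enough to be absorbed. Concretely I would argue that after restriction, each ReLU gate either becomes essentially constant on the restricted cube, or its "active" halfspace boundary passes through only a small subcube, so that a bottom layer of $s$ ReLU gates collapses (with probability $\to 1$ as $s$ grows like $n^{1-\delta}$) to something that, together with the top LTF, cannot correlate with the hard restricted function beyond $1/2 + o(\epsilon)$. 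This is where the quantitative exponents come from: balancing the restriction parameter $p \sim \mathrm{poly}(\log n)/n$ against the number of gates $s$ gives the threshold $\epsilon > \sqrt{2\log^{2/(2-\delta)}(n)/n}$ and the size bound $\Omega(\epsilon^{2(1-\delta)} n^{1-\delta})$, and the free parameter $\delta \in (0,\tfrac12)$ records the slack in how aggressively we restrict versus how much correlation we can afford to lose.

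The structure of the argument is therefore: (1) fix the restriction distribution $\rho$ so that exactly/at-least one variable survives in each parity block and a typical number survive among the $x$-variables; (2) show that conditioned on a good $\rho$, $A_n\restriction_\rho$ restricted to the $x$-block is a uniformly random (or suitably hard) function of $\Theta(\log n)$ bits and that a circuit of size $s$ for $A_n$ induces a size-$\le s$ LTF-of-ReLU circuit still $\Omega(\epsilon)$-correlated with it; (3) prove the shrinkage/simplification lemma for ReLU gates under $\rho$, showing that any LTF-of-ReLU circuit of size $s$ with $s$ below the stated bound, after restriction, has correlation $o(\epsilon)$ with a random function on $\Theta(\log n)$ bits with probability $1-o(1)$; (4) combine via a union bound / averaging to reach a contradiction. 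The main obstacle I anticipate is Step (3): unlike LTF gates, a ReLU gate is real-valued and its output can vary continuously, so "simplification under restriction" is not literally about becoming a junta — I would need to control how the real-valued outputs of the surviving ReLU gates, fed into the top LTF, can still encode a complicated Boolean function, and show that too few gates cannot do this. Handling the unrestricted (arbitrarily large) weights here — so that a ReLU gate's restriction behaviour depends only on the combinatorial pattern of signs, not the magnitudes — is the delicate point, and is presumably where the $\delta$-dependent loss is spent.
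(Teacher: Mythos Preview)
Your overall architecture matches the paper's: random restrictions on the Andreev function, so that $A_n\restriction_\rho$ becomes a uniformly random function of $\lfloor\log(n/2)\rfloor$ bits, combined with a simplification of the LTF-of-ReLU circuit under $\rho$, and then an anti-concentration statement saying small circuits cannot $\epsilon$-correlate with a random function. Steps (1), (2), (4) are essentially what the paper does.

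The gap is exactly where you flag it, Step (3), and the paper resolves it by two ideas you are missing. First, the correct notion of a ReLU gate becoming ``removable'' under restriction is not that it becomes a junta or depends on few variables; it is that its linear argument $b+\langle \w,\x\rangle$ becomes sign-definite on the restricted subcube. If it is nonpositive everywhere the gate outputs the constant $0$; if it is nonnegative everywhere the gate outputs a \emph{linear} function of the surviving variables, which can be absorbed as direct wires into the top LTF. In either case the ReLU gate disappears. The point is that this sign-definiteness condition is \emph{identical} to the condition under which an LTF gate with the same linear form becomes constant, so the Kane--Williams random-restriction lemma for LTF gates applies verbatim, giving $\Pr_\rho[\text{ReLU not removable}] = O(\log n/\sqrt{n})$ with no dependence on weight magnitudes. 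This is what lets you bypass the real-valued-output worry entirely.

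Second, once the restricted circuit has only $s(n,\epsilon)$ surviving ReLU gates (plus direct input wires to the top LTF), you still need to argue it cannot $\epsilon$-correlate with a random function on $\lfloor\log(n/2)\rfloor$ bits. The paper does this by a counting argument: Maass's bound gives that LTF-of-ReLU circuits of width $s$ on $k$ inputs realize at most $2^{O((sk)^2\log(sk))}$ Boolean functions, and a union bound over this class against a random target function yields the required failure probability. Balancing this count against $2^{k}$ with $k=\lfloor\log(n/2)\rfloor$ is precisely what produces the exponent $1-\delta$ and the $\epsilon$ threshold. Your proposal never invokes such a function-counting step, and without it Step (3) has no teeth.
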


~\\ 
It is well known that proving lower bounds without restrictions on the weights is much more challenging even in the context of LTF circuits. In fact, the recent results in ~\cite{kane2016super} are the first superlinear lower bounds for LTF circuits with no restrictions on the weights. With restrictions on some or all the weights, e.g., assuming $poly(n)$ bounds on the weights (typically termed the ``small weight asssumption") in certain layers, exponential lower bounds have been established for LTF circuits ~\cite{hajnal1987threshold,impagliazzo1997size,sherstov2009separating,sherstov2011unbounded}. Our next results are of this flavor: under certain kinds of weight restrictions, we prove exponential size lower bounds on the size of LTF-of-(ReLU)$^{d-1}$ circuits. {\em One thing to note is that our weight restrictions are assumed only on the bottom layer (closest to the input). The other layers can have gates with unbounded weights.} Nevertheless, our weight restrictions are somewhat unconventional. 

\begin{definition}\label{def:weight-restriction}[Weight restriction condition] Let $m \in \N$ and $\sigma$ be any permutation of $\{1, \ldots, 2^m\}$. Let us also consider an arbitrary sequencing $\{\x^1, \ldots, \x^{2^m}\}$ of the vertices of the hypercube $\{-1,1\}^m$. Define the polyhedral cone $$P_{m,\sigma} := \{\a \in \R^m: \langle \a, \x^{\sigma(1)} \rangle \leq \langle \a, \x^{\sigma(2)} \rangle \leq \ldots \langle \a, \x^{\sigma(2^m)} \rangle\}.$$
In words, $P_{m,\sigma}$ is the set of all linear objectives that order the vertices of the $m$-dimensional hypercube in the order specified by $\sigma$. We will impose the condition that there exists a $\sigma$ such that for each ReLU gate in the bottom layer, the vector $\w \in P_{m,\sigma}$ ($\w$ as defined in~\eqref{eq:RELU-def}) and all weights are integers with magnitude bounded by some $W>0$.
\end{definition}
~\\ 
We will prove our lower bounds against the function proposed by Arkadev and Nikhil in ~\cite{chattopadhyay2017weights},
\begin{align}\label{AN}
g : OMB_n^0 \circ OR_{n^{\frac 1 3} - \log n } \circ XOR_2 : \{-1,1\}^{2(n^{\frac 4 3} -  n \log n)} \rightarrow \{-1,1\}
\end{align}
which we will refer to as the Arkadev-Nikhil function in the remainder of the paper. Here OMB is the ODD-MAX-BIT function which is a $\pm 1$ threshold gate which evaluates to $-1$ on say a $n-$bit input $\x$ if $\sum_{i=1}^n (-1)^{i+1}2^i(1+x_i) \geq \frac{1}{2}$. We show the following exponential lowerbound against this function, 
~\\ 
\begin{theorem}\label{deepReLU}
Let $m, d, W\in \N$. Any depth $d$ LTF-of-(ReLU)$^{d-1}$ circuits on $2m$ bits such that the weights in the bottom layer are restricted as per Definition~\ref{def:weight-restriction} that implements the Arkadev-Nikhil function on $2m$ bits will require a circuit size of $$\Omega \left ( (d-1) \frac{2^{\frac {m^{\frac 1 8}}{ d-1}}}{(mW)^{\frac 1 {d-1}}}\right ).$$
Consequently, one obtains the same size lower bounds for circuits with only LTF gates of depth $d$.
\end{theorem}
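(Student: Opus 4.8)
The argument goes through the \emph{sign-rank} method. Fix the natural two-party partition of the $2m$ input bits induced by the bottom $\mathrm{XOR}_2$ layer of $g$ (one bit of each pair to each party), and for a Boolean function $h$ on these $2m$ bits let $M_h$ be the associated $2^m \times 2^m$ sign matrix; the sign-rank of $h$ is the least rank of a real matrix agreeing with $M_h$ entrywise in sign. The first ingredient is the sign-rank lower bound for the Arkadev--Nikhil function from~\cite{chattopadhyay2017weights}, which in our normalization reads $\text{sign-rank}(g) \ge 2^{\Omega(m^{1/8})}$; here the parameter bookkeeping uses only that the input length of $g$ is $\Theta(m)$.

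The heart of the proof is the structural estimate that any depth-$d$ LTF-of-(ReLU)$^{d-1}$ circuit on $2m$ bits whose bottom layer obeys the weight restriction of Definition~\ref{def:weight-restriction} and which computes $h$ satisfies $\text{sign-rank}(h) \le (mW)\cdot \prod_{i=1}^{d-1} s_i$, where $s_1,\dots,s_{d-1}$ are the sizes of the $d-1$ ReLU layers. I would establish this in two stages. \emph{Bottom layer.} Since every bottom-layer weight vector lies in the common cone $P_{m,\sigma}$, all of them induce one and the same total order $\preceq_\sigma$ on the vertices of the hypercube; hence for each bottom ReLU gate the pre-activation $b+\langle \w,\x\rangle$, and therefore the output $\max\{0,b+\langle\w,\x\rangle\}$, is monotone nondecreasing along $\preceq_\sigma$ and, the weights being integers bounded by $W$, takes integer values in a window of length $O(mW)$. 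Restricting $\preceq_\sigma$ to one party's block of coordinates gives a well-defined order on that block's $2^m$ configurations along which every bottom gate is still monotone and integer-valued; this collapses those configurations into $O(mW\,s_1)$ classes as seen by the bottom layer, so the matrix of the bottom-layer output vector has rank $O(mW\,s_1)$. \emph{Higher layers.} One then propagates an upper bound on the rank of the real matrix computed at each gate up through the circuit, the goal being that a pass through a ReLU layer of size $s_i$ multiplies the current bound by $O(s_i)$ (linear combinations add ranks, and the rectification at each gate must be tamed using the structured family in which the admissible matrices live), while the final LTF gate only converts rank into sign-rank. Multiplying through gives $\text{sign-rank}(h) = O\big(mW\prod_{i=1}^{d-1} s_i\big)$.

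Given the structural estimate, the theorem is a one-line optimization. Writing $s$ for the total size, $\sum_{i=1}^{d-1} s_i \le s$, so AM--GM gives $\prod_{i=1}^{d-1} s_i \le (s/(d-1))^{d-1}$, hence $2^{\Omega(m^{1/8})} \le (mW)\,(s/(d-1))^{d-1}$, which rearranges to $s = \Omega\!\big((d-1)\,2^{m^{1/8}/(d-1)}\,(mW)^{-1/(d-1)}\big)$, the claimed bound. For the final sentence about pure LTF circuits: on the Boolean cube a single LTF gate is computed by two ReLU gates sharing its weight vector (Appendix~\ref{sec:LTF-ReLU}), with the two outputs recombined into the (linear) input of the gate above; applying this to every LTF gate except the output gate turns a depth-$d$ LTF circuit satisfying the bottom-layer weight restriction into a depth-$d$ LTF-of-(ReLU)$^{d-1}$ circuit with the same restriction and at most twice the size, so the lower bound transfers up to the constant absorbed by $\Omega$.

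\textbf{Main obstacle.} The sensitive step is the propagation through the intermediate ReLU layers. A ReLU gate outputs a real number rather than a sign, and the entrywise map $M \mapsto \max\{0,M\}$ can inflate the rank of $M$ almost arbitrarily; moreover, because gates above the bottom may also read the raw inputs with unrestricted weights, the ``few distinct rows'' phenomenon of the bottom layer does not simply persist upward, so one cannot naively track ``rank of the matrix at each gate''. Extracting the clean multiplicative factor $O(s_i)$ per ReLU layer — and hence the product $\prod_i s_i$ rather than a single layer's width — is the technical core of the proof, and is precisely where the geometry of the cone $P_{m,\sigma}$ in Definition~\ref{def:weight-restriction} has to be exploited most carefully.
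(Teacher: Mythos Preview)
Your overall plan coincides with the paper's: upper-bound the sign-rank of any such circuit by roughly $mW\prod_i s_i$, compare with the Arkadev--Nikhil sign-rank lower bound, and finish with AM--GM to pass from $\prod_i s_i$ to $\sum_i s_i$. Your bottom-layer reading is also right in spirit: the common cone $P_{m,\sigma}$ fixes one ordering of the hypercube vertices (on each side of the bipartition) along which every bottom gate's affine form is monotone and integer-valued in a window of length $O(mW)$, so each bottom gate's matrix is constant on $O(mW)$ contiguous row- and column-intervals in that fixed ordering.

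The gap is precisely the step you flag as the main obstacle. You propose to propagate a \emph{rank} bound through the ReLU layers and then note, correctly, that entrywise $\max\{0,\cdot\}$ can blow up rank arbitrarily; you leave this unresolved. The paper's resolution is to \emph{not} track rank. The invariant carried through the induction is the contiguous block-constant structure itself (Lemma~\ref{ReLU-circuit-rank}): with rows and columns held in the fixed $\sigma$-order, the matrix at each gate is constant on a grid of contiguous row- and column-intervals. Entrywise $\max\{0,\cdot\}$ trivially preserves this with the \emph{same} intervals; a linear combination of $s_i$ such matrices only refines the partition to at most $s_i$ times as many intervals (take the union of the breakpoints --- this is Claim~\ref{claim:block-matrices}). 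Inducting gives $O\big(mW\prod_{i} s_i\big)$ intervals per side and hence at most that many distinct rows, which bounds the rank, and then the sign-rank once the final LTF is applied. This is exactly your ``collapses into classes'' picture at the bottom layer; you only need to carry \emph{that} invariant upward instead of rank. Two minor points: your worry that higher gates ``may also read the raw inputs with unrestricted weights'' is misplaced in this strictly layered model (each layer reads only the previous one, which is what makes the induction go through); and your stated sign-rank lower bound should be $2^{\Omega(m^{1/4})}$, not $2^{\Omega(m^{1/8})}$, since $m\sim n^{4/3}$ and Theorem~\ref{thm:arkadev-nikhil} gives $2^{\Omega(n^{1/3})}$.
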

~\\
Note that this is an exponential in dimension size lowerbound for even super-polynomially growing bottom layer weights (and additional constraints as per Definition~\ref{def:weight-restriction}) and upto depths scaling as $d = O(m^\xi)$ with $\xi < \frac{1}{8}$.
~\\ \\
We note that the Arkadev-Nikhil function can be represented by an $O(m)$ size LTF-of-LTF circuit with no restrictions on weights (see Theorem~\ref{thm:arkadev-nikhil} below). In light of this fact, Theorem~\ref{deepReLU} is somewhat surprising as it shows that for the purpose of representing Boolean functions a deep ReLU circuit (ending in a LTF) gate can get exponentially weakened when just its bottom layer weights are restricted as per Definition~\ref{def:weight-restriction}, even if the integers are allowed to be super-polynomially large. Moreover, the lower bounds also hold of LTF circuits of arbitrary depth $d$, under the same weight restrictions on the bottom layer. We are unaware of any exponential lower bounds on LTF circuits of arbitrary depth under any kind of weight restrictions.


~\\ 
We will use the method of sign-rank to obtain the exponential lowerbounds in Theorems~\ref{deepReLU}.
The {\emph sign-rank} of a real matrix $A$ with all non-zero entries is the least rank of a matrix $B$ of the same dimension with all non-zero entries such that for each entry $(i,j)$, $sign(B_{ij}) = sign(A_{ij})$. For a Boolean function $f$ mapping, $f : \{-1,1\}^m \times \{-1,1\}^m \rightarrow \{-1,1\}$ one defines the ``sign-rank of f" as the sign-rank of the $2^m \times 2^m$ dimensional matrix $[f(\x,\y)]_{\x ,\y \in \{-1,1\}^m}$. This notion of a sign-rank has been used to great effect in diverse fields from communication complexity to circuit complexity to learning theory. Explicit matrices with a high sign-rank were not known till the breakthrough work by Forster, \cite{forster2002linear}. Forster et. al. showed elegant use of this complexity measure to show exponential lowerbounds against LTF-of-MAJ circuits in \cite{forster2001relations}. Lot of the previous literature about sign-rank has been reviewed in the book by Satya Lokam~\cite{lokam2009complexity}. Most recently the following result was obtained by Arkadev and Nikhil in~\cite{chattopadhyay2017weights} leading to a proof of strict containment of LTF-of-MAJ in LTF-of-LTF. 

\begin{theorem}\label{thm:arkadev-nikhil}[{\bf Theorem $4.2$ and Corollary $1.2$ in ~\cite{chattopadhyay2017weights}}]
~\\
The Akradev-Nikhil function $g$ in equation \ref{AN} can be represented by a linear sized LTF-of-LTF circuit and $\text{sign-rank}(g) \geq \frac {2^{n^{\frac 1 3} - 2 \log n } }{16} $
\end{theorem}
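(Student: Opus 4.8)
\emph{The LTF-of-LTF circuit.} The two parts want entirely different techniques; I would dispose of the circuit first, since it is the easy half. The idea is to collapse the bottom two layers by hand. Fix one of the $n$ outer blocks and write its $2k$ input bits as a pair $\u,\v\in\{0,1\}^{k}$ with $k=n^{1/3}-\log n$; the block computes $z=\bigvee_{j=1}^{k}(u_j\oplus v_j)$, so $z=1$ exactly when $\u\neq\v$, i.e.\ exactly when $\u>_{\mathrm{lex}}\v$ or $\v>_{\mathrm{lex}}\u$. But lexicographic comparison of two $k$-bit strings is a \emph{single} LTF, since $\u>_{\mathrm{lex}}\v\iff\sum_{j=1}^{k}2^{k-j}(u_j-v_j)\ge 1$. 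Hence $z=p+q$ where $p:=\mathbf 1[\u>_{\mathrm{lex}}\v]$ and $q:=\mathbf 1[\v>_{\mathrm{lex}}\u]$ are LTFs that are never simultaneously $1$. Doing this for all $n$ blocks yields $2n$ bottom-layer LTFs $p_i,q_i$, and since $\mathrm{OMB}_n$ is itself an LTF (with geometrically growing weights) of its inputs $z_i=p_i+q_i$, the whole function is a single LTF of these $2n$ LTFs, $g=\operatorname{sign}\big(\tfrac12-\sum_{i=1}^{n}(-1)^{i+1}2^{i+1}(p_i+q_i)\big)$. This is an LTF-of-LTF circuit with $2n+1$ gates, which is linear in the input length $2(n^{4/3}-n\log n)$; I would emphasize that its weights are exponential, so it does \emph{not} collapse to MAJ-of-MAJ, which is exactly why it is consistent with the sign-rank bound below.

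\emph{The sign-rank bound, high level.} Let $M_g\in\{-1,1\}^{2^m\times 2^m}$ be the sign matrix of $g$, with $m=n^{4/3}-n\log n$. Since sign-rank is invariant under scaling rows and columns by positive reals, it suffices to produce such a re-weighting $\widetilde M_g$ whose spectral norm is $2^{-\Theta(d)}$ times its trivial value $2^{m}$, and then to invoke Forster's theorem in the form valid for positively re-weighted matrices, which yields $\text{sign-rank}(g)\ge 2^{\Theta(d)}/\poly(n)$. The re-weighting is manufactured from a dual witness for the \emph{outer} function $h:=\mathrm{OMB}_n\circ\mathrm{OR}_k$: a probability measure $\mu$ on the inputs of $h$ that is within a polynomial factor of uniform and under which $h$ is uncorrelated with every polynomial of degree below $d$, where $d$ is essentially $k$. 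The $\mathrm{XOR}_2$ gadget at the bottom of $g$ is exactly the device that turns this into a spectral statement: the matrix $[h(\x\oplus\y)]$ is diagonalized by the Fourier characters, so annihilating the $\mu$-weighted Fourier mass of $h$ up to order $d$ drives the spectral norm of $\widetilde M_g$ down by a factor $2^{\Theta(d)}$ --- this is the matrix-analytic incarnation of Sherstov's pattern-matrix method specialized to an XOR gadget. Taking $d\approx k=n^{1/3}-\log n$ and accounting for the polynomial-in-$n$ losses coming from the gadget, from the smoothness parameter, and from Forster's inequality produces the stated $\text{sign-rank}(g)\ge 2^{n^{1/3}-2\log n}/16$; as a by-product $g$ lands outside LTF-of-MAJ (whose sign-rank is only quasi-polynomial in the number of gates), which together with the first part separates LTF-of-LTF from LTF-of-MAJ.

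\emph{The main obstacle.} Everything above is essentially bookkeeping once one has the smooth dual witness for $h=\mathrm{OMB}_n\circ\mathrm{OR}_k$, and constructing that witness is the real work. The hard part will be to produce it by LP duality applied to the ``sign-representation by degree-$d$ polynomials'' program and then to \emph{block-compose} a dual witness for a single $\mathrm{OMB}_n$ gate with dual witnesses for the $\mathrm{OR}_k$ gates, in the spirit of the known dual-lifting lemmas: a Minsky--Papert-type obstruction for the iterated OR supplies degree $\Omega(k)$, and the ODD-MAX-BIT on top is what lets the degree reach (essentially) the full $k$ while keeping the combined measure nearly uniform. Smoothness is not a convenience but a genuine requirement, since Forster's lower bound decays as the dual witness concentrates; so the technical core is verifying that the block composition can be carried out without spoiling the near-uniformity of $\mu$.
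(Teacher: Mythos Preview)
The paper does not prove this theorem at all: it is stated as a black box imported from Chattopadhyay--Mande~\cite{chattopadhyay2017weights} (hence the label ``Theorem~4.2 and Corollary~1.2 in~\cite{chattopadhyay2017weights}''), and the surrounding text immediately moves on to using it. So there is no ``paper's own proof'' to compare against.

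That said, your sketch is a faithful outline of how the cited result is actually established. The LTF-of-LTF construction via $\mathrm{NEQ}(\u,\v)=\mathbf 1[\u>_{\mathrm{lex}}\v]+\mathbf 1[\v>_{\mathrm{lex}}\u]$ and then feeding the $2n$ comparators into a single OMB-weighted threshold is exactly the construction in~\cite{chattopadhyay2017weights}. For the sign-rank half, your plan---a smooth orthogonalizing distribution (equivalently, smooth threshold degree $\Omega(k)$) for $h=\mathrm{OMB}_n\circ\mathrm{OR}_k$, lifted through the $\mathrm{XOR}_2$ gadget to a spectral-norm bound and then fed into Forster's inequality---is precisely the Razborov--Sherstov/Chattopadhyay--Mande pipeline. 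One small caveat: in~\cite{chattopadhyay2017weights} the smooth dual for $\mathrm{OMB}\circ\mathrm{OR}$ is built more directly (exploiting the specific structure of OMB to keep the measure near-uniform) rather than as a generic ``OMB dual composed with OR duals'' lift; your description of it as block-composition is morally right but slightly glosses over where the smoothness guarantee actually comes from. Since the present paper only \emph{uses} the statement, none of this affects anything here.
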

~\\
We will prove our theorem by showing a small upper bound on the sign-rank of LTF-of-(ReLU)$^{d-1}$ circuits which have their bottom most layer's weight restricted in the said way.

\section{Lower bounds for LTF-of-ReLU against the Andreev function (Proof of Theorem~\ref{thm:andreev-LTF-ReLU})}

We will use the classic ``method of random restrictions" ~\cite{subbotovskaya1961realizations,stad1998shrinkage,hastad1986almost,yao1985separating,rossman2008constant} to show a lowerbound for weight unrestricted LTF-of-ReLU circuits for representing the Andreev function. The basic philosophy of this method is to take any arbitrary LTF-of-ReLU circuit which supposedly matches the Andreev function on a large fraction of the inputs and to randomly fix the values on some of its input coordinates and also do the same fixing on the same coordinates of the input to the Andreev function. Then we show that upon doing this restriction the Andreev function collapses to an arbitrary Boolean function on the remaining inputs (what it collapses to depends on what values were fixed on its inputs that got restricted).  But on the other hand we show that the LTF-of-ReLU collapses to a circuit which is of such a small size that with high-probability it cannot possibly approximate a randomly chosen Boolean function on the remaining inputs. This contradiction leads to a lowerbound. 
~\\ \\
There are two important concepts towards implementing the above idea. First one has to precisely define as to when can a ReLU gate upon a partial restriction of its inputs be considered to be removable from the circuit. Once this notion is clarified it will automatically turn out that doing random restrictions on ReLU is the same as doing random restriction on a LTF gate as was recently done in ~\cite{kane2016super}. The secondly it needs to be true that at any fixed size LTF-of-ReLU circuits cannot represent too many of all the Boolean functions possible at the same input dimension.  For this very specific case of LTF-of-ReLU circuits where ReLU gates necessarily have a fan-out of $1$, Theorem 2.1 in ~\cite{maass1997bounds} applies and we have from there that LTF-of-ReLU circuits over $n-$bits with $w$ ReLU gates can represent at most $N = 2^{O((wn + w + w+1 +1)^2\log (wn + w + w+1 +1))} = 2^{O((wn+2w+2)^2\log (wn+2w+2))}$ number of Boolean functions. We note that slightly departing from the usual convention with neural networks here in this work by Wolfgaang Mass he allows for direct wires from the input nodes to the output LTF gate. This flexibility ties in nicely with how we want to define a ReLU gate to be becoming removable under the random restrictions that we use.
~\\ 



\paragraph{Random Boolean functions vs any circuit class}
~\\ \\
In everything that follows all samplings being done (denoted as $\sim$) are to be understood as sampling from an uniform distribution unless otherwise specified. Firstly we note this well-known lemma, 

\begin{claim}
Let $f : \{-1,1\}^n \rightarrow \{-1,1\}$ be any given Boolean function. Then the following is true, 
\[ \mathbb{P}_{g \sim \{ \{-1,1\}^n \rightarrow \{-1,1\} \}} \left [ \mathbb{P}_{\x \sim \{-1,1\}^n} [ f(\x) = g(\x) ] \geq \frac {1}{2} + \epsilon \right ] \leq e^{-2^{n+1}\epsilon^2} \]
\end{claim}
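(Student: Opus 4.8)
The plan is to observe that for a fixed Boolean function $f$, the quantity $\mathbb{P}_{\x \sim \{-1,1\}^n}[f(\x) = g(\x)]$ --- viewed as a function of the random draw of $g$ --- is an empirical mean of $2^n$ i.i.d.\ $\{0,1\}$-valued random variables, each with expectation $\tfrac{1}{2}$; the claimed inequality is then precisely Hoeffding's (equivalently, Chernoff's) concentration bound for such a mean. The key structural fact being exploited is that choosing $g$ uniformly at random makes its values at distinct vertices of the cube independent.

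In detail, I would first enumerate the vertices of the cube as $\x^1, \dots, \x^N$ with $N := 2^n$, and note that drawing $g$ uniformly from all maps $\{-1,1\}^n \to \{-1,1\}$ is the same as choosing the values $g(\x^1), \dots, g(\x^N)$ independently and uniformly from $\{-1,1\}$. Consequently the indicators $Y_i := \mathbf{1}[f(\x^i) = g(\x^i)]$ are independent Bernoulli$(\tfrac12)$ random variables, whatever $f$ is. Second, since $\x$ is uniform on the cube, $\mathbb{P}_{\x \sim \{-1,1\}^n}[f(\x) = g(\x)] = \frac{1}{N} \sum_{i=1}^N Y_i$, so the left-hand side of the claim equals $\mathbb{P}\!\left[ \frac{1}{N}\sum_{i=1}^N Y_i \ge \tfrac12 + \epsilon \right]$. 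Third, I would invoke Hoeffding's inequality for a sum of independent $[0,1]$-valued random variables, namely $\mathbb{P}\!\left[ \frac{1}{N}\sum_{i=1}^N Y_i - \mathbb{E} Y_1 \ge \epsilon \right] \le e^{-2N\epsilon^2}$; substituting $\mathbb{E} Y_1 = \tfrac12$ and $N = 2^n$ yields the bound $e^{-2^{n+1}\epsilon^2}$, exactly as stated.

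I do not expect a genuine obstacle here --- this is the textbook fact that a uniformly random Boolean function agrees with any fixed function on only about half the inputs, with exponentially small deviations. The one point to be careful about is the constant in the exponent: the additive Hoeffding bound supplies the factor of $2$ that turns $2^n\epsilon^2$ into $2^{n+1}\epsilon^2$, so one should use that form (or the two-sided Chernoff bound with the matching constant) rather than a weaker multiplicative estimate, which would not reproduce the stated inequality verbatim.
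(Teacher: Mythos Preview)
Your proposal is correct: the identification of $\mathbb{P}_{\x}[f(\x)=g(\x)]$ with $\frac{1}{N}\sum_{i=1}^N Y_i$ for i.i.d.\ Bernoulli$(\tfrac12)$ variables $Y_i$ and the subsequent application of the additive Hoeffding bound with $N=2^n$ yields exactly $e^{-2^{n+1}\epsilon^2}$. The paper itself does not give a proof of this claim at all---it simply introduces it as a ``well-known lemma''---so your argument is the standard justification and there is nothing further to compare.
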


~\\
From the above it follows that if $N$ is the total number of functions in any circuit class (whose members be called $C$) then we have by union bound,

\begin{align}\label{randomapprox}
\mathbb{P}_{g \sim \{ \{-1,1\}^n \rightarrow \{-1,1\} \}} \left [ \exists C \text{ s.t } \mathbb{P}_{\x \sim \{-1,1\}^n} [ C(\x) = g(\x) ] \geq \frac {1}{2} + \epsilon \right ] \leq N e^{-2^{n+1}\epsilon^2}
\end{align}
~\\ 
~\\ 
Equipped with these basics we are now ready to begin the proof of the lowerbound against weight unrestricted LTF-of-ReLU circuits, 
\begin{proof}
\begin{definition}
Let $D$ denote arbitrary LTF-of-ReLU circuits over $\lfloor \log (\frac{n}{2} ) \rfloor$ bits.
\end{definition}
 
~\\ 
 For some $\frac{\epsilon}{3} \leq \frac{1}{2}$ and a size function denoted as $s(n,\epsilon)$ we use equation \ref{randomapprox} , the definition of $D$ above and the upperbound given earlier for the number of LTF-of-ReLU functions at a fixed circuit size (now used for circuits on $\lfloor \log (\frac{n}{2} ) \rfloor$  bits) to get, 
\begin{align*}
\mathbb{P}_{\substack{f \sim \{0,1\}^{\lfloor \log (\frac{n}{2} ) \rfloor} \rightarrow \{0,1\} }} & \Bigg [\forall D \text{ s.t } \vert D \vert \leq s(n,\epsilon)\text{ }\vert \mathbb{P}_{ \y \sim \{0,1\}^{\lfloor \log (\frac{n}{2} ) \rfloor}} [ f(\y) = D(\y) ]  \leq \Big (\frac {1}{2} + \frac {\epsilon}{3} \Big ) \Bigg ]\\
&\geq 1 - 2^{O(s^2\log^2(\frac{n}{2})\log(\log(\frac{n}{2})s))}e^{-\left (\frac{\epsilon^2}{9} \right)2^{1+\lfloor \log \left ( \frac{n}{2} \right ) \rfloor}}\\
&\geq 1 - 2^{O(s^2k^2\log(ks))}e^{-\left (\frac{2\epsilon^2}{9} \right)2^{k}}
\end{align*}
whereby in the last inequality above we have assumed that $n=2^{k+1}$. This assumption is legitimate because we want to estimate certain large $n$ asymptotics. For any arbitrarily chosen constant $C < \frac{2}{9}$ we try to satisfy the following condition, $O(s^2k^2\log(ks))-\frac{2\epsilon^2 2^k}{9} \leq -C\epsilon^22^k \implies O(s^2k^2\log(ks)) \leq O ( \epsilon^2 2^k) $. For any constant $\theta >0$ for large enough $x >0$ we would have $\log(x) < x^\theta$ and hence the above constraint on $s$ gets satisfied if we work in the regime, $s \leq O(\frac{(\epsilon^22^k)^{\frac{1}{2+\theta}}}{k})$. So for this range of $s$ we have, $2^{O(s^2k^2\log(ks))}e^{-\left (\frac{2\epsilon^2}{9} \right)2^{k}} \leq e^{O(s^2k^2\log(ks))-\left (\frac{2\epsilon^2}{9} \right)2^{k}} \leq e^{-C\epsilon^22^k}$. Now we want, $e^{-C\epsilon^22^k} \leq \frac{\epsilon}{3}$. But on the otherhand for the upperbound on $s$ to make sense we need, $\epsilon^2 2^k  \geq k^{2+\theta}$. Its clear that both the conditions get satisfied if for asymptotically large $n$ we choose $\epsilon > \sqrt{\frac{2\log^{2+\theta} (\frac{n}{2})}{n}}$. And corresponding to this we have for $s(n,\epsilon) \leq O(\frac{\epsilon^{\frac{2}{2+\theta}}n^{\frac{1}{2+\theta}}}{2^{\frac{1}{2+\theta}}\log (\frac{n}{2})})$
\begin{align}\label{smaass}
\nonumber \mathbb{P}_{\substack{f \sim \{0,1\}^{\lfloor \log (\frac{n}{2} ) \rfloor} \rightarrow \{0,1\} }} & \Bigg [\forall D \text{ s.t } \vert D \vert \leq s(n,\epsilon)\text{ }\vert \mathbb{P}_{ \y \sim \{0,1\}^{\lfloor \log (\frac{n}{2} ) \rfloor}} [ f(\y) = D(\y) ]  \leq \Big (\frac {1}{2} + \frac {\epsilon}{3} \Big ) \Bigg ]\\
&\geq  1 - \frac{\epsilon}{3} 
\end{align}
~\\
\begin{definition}[$\mathbf{ F^*}$]
Let $F^*$ be the subset of all these $f$ above for which the above event is true.\\
\end{definition}
~\\
Now we recall the definition of the Andreev function in equation \ref{Andreev} for the following definition and the claim, 
\begin{definition}[$\mathbf{\rho}$]
Let $\rho$ denote the set of all possible ``random restrictions" where one is fixing all the input bits of $A_n$ except $1$ bit in each row of the matrix $a$. So the restricted function (call it $A_n \vert _{\rho}$ by overloading the notation for simplicity) computes a function of the form,

\[ A_n \vert_{\rho} : \{0,1\}^{ \lfloor \log (\frac{n}{2} ) \rfloor } \rightarrow \{0,1\}\]
\end{definition}
~\\
From the definitions of $A_n$ and $\rho$ above the following is immediate, 

\begin{claim}
The truth table of $A_n \vert _\rho$ is the $\x$ string in the input to $A_n$ that gets fixed by $\rho$. Thus we observe that if $\rho$ is chosen uniformly at random then $A_n \vert _\rho$ is a $\lfloor \log (\frac{n}{2} ) \rfloor$ bit Boolean function chosen uniformly at random. 
\end{claim}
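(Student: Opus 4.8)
The plan is to treat the claim as a structural fact about Andreev's function followed by a one-line symmetry argument; there is no genuine obstacle here, so most of the work is careful bookkeeping about what the restriction $\rho$ actually does.

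First I would fix notation. Write $k := \lfloor \log(n/2)\rfloor$ for the number of rows of the matrix block $[a_{ij}]$; after the restriction exactly one variable $y_i$ survives in row $i$, and $\y=(y_1,\dots,y_k)\in\{0,1\}^k$ is the input of $A_n\vert_\rho$. By definition of $\rho$: it fixes the entire $\x$-block to some value, in each row $i$ it fixes all entries of $[a_{ij}]$ except one, and it records which entry of row $i$ is left free. For each $i$ let $c_i\in\{0,1\}$ be the parity of the entries of row $i$ that $\rho$ fixed, and put $\c=(c_1,\dots,c_k)$. Since the parity of a row is a symmetric function of its entries, after substituting $\y$ the parity of row $i$ equals $y_i\oplus c_i$, regardless of which entry was chosen to remain free; hence $A_n\vert_\rho$ depends on $\rho$ only through the fixed $\x$-block and through $\c$.

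Next I would evaluate directly. By the definition of $A_n$, for every $\y\in\{0,1\}^k$,
\[ A_n\vert_\rho(\y) \;=\; x_{\,\text{bin}(y_1\oplus c_1,\,\dots,\,y_k\oplus c_k)}. \]
The map $\y\mapsto \text{bin}(y_1\oplus c_1,\dots,y_k\oplus c_k)$ is a bijection of $\{0,1\}^k$ onto $\{0,1,\dots,2^k-1\}$, because XOR with the fixed vector $\c$ permutes $\{0,1\}^k$ and $\text{bin}(\cdot)$ is a bijection onto $\{0,\dots,2^k-1\}$; here one checks $2^k=2^{\lfloor \log(n/2)\rfloor}\le \lfloor n/2\rfloor$, so each index that occurs is a legal index into the $\x$-block. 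Therefore the truth table of $A_n\vert_\rho$ --- its $2^k$ values listed over the inputs in binary-counting order --- is exactly the fixed $\x$-string read in the order induced by $\c$; after relabeling the input variables by the coordinate flips $y_i\mapsto y_i\oplus c_i$ (equivalently, when $\c=\mathbf{0}$) it is literally the $\x$-string fixed by $\rho$, which is the first sentence of the claim.

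Finally I would sample $\rho$ uniformly. Then the relevant $2^k$ bits of the $\x$-block are uniform on $\{0,1\}^{2^k}$, and they are independent of the coordinates of $\rho$ that determine $\c$ (the fixed entries of $[a_{ij}]$) and of the chosen free positions, since these live in disjoint blocks of the input. Conditioning on any value of $\c$, the truth table of $A_n\vert_\rho$ is a fixed permutation (the one induced by $\c$) of a uniform string in $\{0,1\}^{2^k}$, hence is again uniform; as this holds for every $\c$, the truth table of $A_n\vert_\rho$ is unconditionally uniform on $\{0,1\}^{2^k}$, i.e.\ $A_n\vert_\rho$ is a uniformly random Boolean function on $\lfloor \log(n/2)\rfloor$ bits. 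The only points needing any care --- and they are the ``hard'' part of an otherwise immediate argument --- are the index-range inequality $2^k\le\lfloor n/2\rfloor$ and the remark that the row parity, and hence $A_n\vert_\rho$, does not depend on which entry of each row is left free.
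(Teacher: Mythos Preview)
Your proof is correct and is precisely the detailed unpacking of the definitions that the paper omits; the paper itself gives no proof beyond the remark that the claim is ``immediate'' from the definitions of $A_n$ and $\rho$. One refinement worth noting: you correctly observe that the truth table of $A_n\vert_\rho$ is only \emph{literally} the fixed $\x$-string when the row-parity offset vector $\c$ is $\mathbf{0}$, and in general is a $\c$-dependent permutation of the relevant $2^k$ bits of $\x$; the paper's first sentence elides this distinction, but since the permutation is independent of $\x$ and a permutation of a uniform string is uniform, the second (and operative) sentence of the claim is unaffected --- exactly as you argue.
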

~\\ \\ 
Let $f^*$ be any arbitrary member of $F^*$. Let $\x^* \in \{0,1\}^{\lfloor \frac {n}{2} \rfloor}$ be the truth-table of $f^*$. Let $\rho(\x^*)$ be restrictions on the input of $A_n$ which fix the $\x$ part of its input to $\x^*$. So when we are sampling restrictions uniformly at random from the restrictions of the type $\rho(\x^*)$ these different instances differ in which bit of each row of the matrix $a$ (of the input to $A_n$) they left unfixed and to what values did they fix the other entries of $a$. Let $C$ be a $n$ bit LTF-of-ReLU Boolean circuit of size say $w(n,\epsilon)$. Thus under the restriction $\rho(\x^*)$ both $C$ and $A_n$ are $\lfloor \log (\frac{n}{2} ) \rfloor$ bit Boolean functions.
~\\ \\
Now we note that a ReLU gate over $n$ bits upon a random restriction becomes redundant (and hence removable) iff its linear argument either reduces to a non-positive definite function or a positive definite function. In the former case the gate is computing the constant function zero and in the later case it is computing a linear function which can be simply implemented by introducing wires connecting the inputs directly to the output LTF gate. Thus in both the cases the resultant function no more needs the ReLU gate for it to be computed.  (We note that such direct wires from the input to the output gate were allowed in how the counting was done of the total number of LTF-of-ReLU Boolean functions at a fixed circuit size.) Combining both the cases we note that the conditions for collapse (in this sense) of a ReLU gate is identical to that of the conditions of collapse for a LTF gate with the same linear argument. Hence corresponding to the random restrictions $\rho$ we can just directly utilize the random restriction lemma $1.1$ from ~\cite{kane2016super} to say that, 
\[ \mathbb{P}_{\rho(\x^*)} [  \text{ReLU}\vert_{\rho(\x^*)} \text {is removable }] \geq \eta \]
where for $\eta = 1 - O(\frac{\log n}{\sqrt{n}})$
\newline
The above definition of $\eta$ implies, 
\begin{align}\label{useofeta}
\nonumber &\mathbb{P}_{\rho(\x^*)} [ \text { A $n-$bit  \text{ReLU} is \emph {not} forced to a constant }] \leq 1 - \eta\\
\nonumber \implies &\mathbb{E}_{\rho(\x^*)} [ \text { Number of  \text{ReLU}s of C \emph {not} forced to a constant }] \leq w(n,\epsilon)(1 - \eta)\\
\nonumber \implies &\mathbb{P}_{\rho(\x^*)} [\text { Number of  \text{ReLU}s of C \emph {not} forced to a constant } > s(n,\epsilon) ]\\
\nonumber &\leq \frac {\mathbb{E}_{\rho(\x^*)} [ \text { Number of  \text{ReLU}s of C \emph {not} forced to a constant }]}{s(n,\epsilon)}\\
\nonumber \implies &\mathbb{P}_{\rho(\x^*)} [\text { Number of  \text{ReLU}s of C \emph {not} forced to a constant } \geq s(n,\epsilon) ] \leq \frac {w(n,\epsilon)(1-\eta)}{s(n,\epsilon)}\\
\nonumber \implies &\mathbb{P}_{\rho(\x^*)} [\text { Size of } C \vert _{\rho(\x^*)} \leq s(n,\epsilon) ] \geq 1 - \frac {w(n,\epsilon)(1-\eta)}{s(n,\epsilon)}\\
\end{align}
~\\
Now we compare with the definitions of $\epsilon$ and $f^*$ to observe that (a) with probability at least $1 - \frac {w(n,\epsilon)(1-\eta)}{s(n,\epsilon)}$, $C \vert _{\rho(\x^*)}$ is of the circuit type as in the event in equation $2$ and (b) by definition of the Andreev function it follows that $A_n\vert_{\rho(\x^*)}$ has its truth table given by $\x^*$ and hence it specifies the same function as $f^* \in F^*$. Hence $\forall \x^* \text{ and } \rho(\x^*)$ this can as well write this as, 

\begin{align}\label{Anisrandom}
\mathbb{P}_{\y \sim \{0,1\}^{\lfloor \log (\frac{n}{2} ) \rfloor}} [C\vert _{\rho(\x^*)}(\y) = A_n \vert _{\rho(\x^*)}(\y) \vert  \text{ Size of } C\vert _{\rho(\x^*)} \leq s(n,\epsilon) ] \leq \frac {1}{2} + \frac {\epsilon}{3}
\end{align}
~\\
$\forall \x^*$ equation \ref{useofeta} can be rewritten as, 
\begin{align}\label{useofeta2}
\mathbb{P}_{\rho(\x^*)} [\text { Size of } C \vert _{\rho(\x^*)} \leq s(n,\epsilon) ] \geq 1 - \frac {w(n,\epsilon)(1-\eta)}{s(n,\epsilon)} 
\end{align}
The equation \ref{smaass} can be written as, 
\begin{align}\label{useofs}
\mathbb{P}_{f \sim \{0,1\}^{\lfloor \log (\frac{n}{2} ) \rfloor} \rightarrow \{0,1\} } [ f \in F^* ]  \geq 1- \frac {\epsilon}{3}
\end{align}

\begin{claim}{Circuits $C$ have low correlation with the Andreev function}
\begin{align*}
\mathbb{P}_{\z \sim \{0,1\}^n}[C(z) = A_n(z)] \leq \frac {\epsilon}{3} + \frac {w(n,\epsilon)(1-\eta)}{s(n,\epsilon)} + \frac {1}{2} + \frac {\epsilon}{3}\\
\end{align*}
\end{claim}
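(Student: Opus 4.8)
The plan is to realize the uniform input distribution on $\{0,1\}^n$ (fed simultaneously to $C$ and to $A_n$) as a three-stage sampling process that matches the structure of a random restriction, and then to combine the three inequalities \eqref{Anisrandom}, \eqref{useofeta2} and \eqref{useofs} by a routine conditioning argument.

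First I would unpack the input structure. An input $\z$ to $A_n$ consists of a string $\x^* \in \{0,1\}^{\lfloor n/2 \rfloor}$ together with a matrix $a=[a_{ij}]$, and sampling $\z$ uniformly is the same as: (i) sampling $\x^*$ uniformly; (ii) sampling a restriction $\rho(\x^*)$ uniformly, i.e.\ for each row of $a$ choosing a uniformly random coordinate to leave free and filling all other coordinates with independent uniform bits; and (iii) sampling the values $\y \in \{0,1\}^{\lfloor \log(n/2)\rfloor}$ of the surviving coordinates uniformly. Since in steps (ii)--(iii) every entry of $a$ ends up an independent uniform bit irrespective of which coordinate of its row was designated free, the pair $(\x^*,a)$ so produced is exactly uniform on $\{0,1\}^n$, and moreover under this coupling $A_n|_{\rho(\x^*)}$ is literally the Boolean function whose truth table is $\x^*$. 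Hence
$$\mathbb{P}_{\z \sim \{0,1\}^n}[C(\z)=A_n(\z)] \;=\; \mathbb{E}_{\x^*}\,\mathbb{E}_{\rho(\x^*)}\,\mathbb{P}_{\y \sim \{0,1\}^{\lfloor \log(n/2)\rfloor}}\big[\,C|_{\rho(\x^*)}(\y)=A_n|_{\rho(\x^*)}(\y)\,\big].$$

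Next I would split the outer expectation over $\x^*$ according to whether $\x^*$ is the truth table of some $f^* \in F^*$. When it is not, the inner quantity is at most $1$, and by \eqref{useofs} this happens with probability at most $\epsilon/3$, contributing at most $\epsilon/3$. When $\x^* \in F^*$, I would further split $\mathbb{E}_{\rho(\x^*)}$ according to whether the size of $C|_{\rho(\x^*)}$ exceeds $s(n,\epsilon)$: the event $\{\text{size}>s(n,\epsilon)\}$ contributes at most $\frac{w(n,\epsilon)(1-\eta)}{s(n,\epsilon)}$ by \eqref{useofeta2}, while on $\{\text{size}\le s(n,\epsilon)\}$, since $A_n|_{\rho(\x^*)}=f^*\in F^*$, the inner probability over $\y$ is at most $\frac12+\frac\epsilon3$ by \eqref{Anisrandom}. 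Adding the three contributions yields exactly $\frac\epsilon3 + \frac{w(n,\epsilon)(1-\eta)}{s(n,\epsilon)} + \frac12 + \frac\epsilon3$, as claimed.

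The computations are entirely routine; the one point that genuinely needs care — and where the argument is most delicate to state cleanly — is the first step: checking that pushing a uniformly random restriction $\rho(\x^*)$ (with $\x^*$ uniform) together with a uniformly random assignment $\y$ to the surviving coordinates reproduces the uniform distribution on the full input, while simultaneously forcing $A_n|_{\rho(\x^*)}$ to be the function with truth table $\x^*$. This is precisely what transports the ``random Boolean function vs.\ small circuit class'' estimate \eqref{smaass} from an abstract random $f$ to the concrete restricted Andreev function and closes the loop.
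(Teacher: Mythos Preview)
Your proposal is correct and follows essentially the same approach as the paper's own proof: both realize uniform sampling of $\z$ as first choosing the truth-table half $\x^*$ (equivalently $\tilde f$), then a restriction $\rho(\x^*)$, then the surviving bits $\y$, and both split first on $\x^*\in F^*$ via \eqref{useofs} and then on the size of $C|_{\rho(\x^*)}$ via \eqref{useofeta2} and \eqref{Anisrandom}. Your explicit verification that the three-stage sampling reproduces the uniform distribution on $\{0,1\}^n$ is in fact slightly cleaner than the paper's presentation, which introduces the coupling in two stages and only later reinterprets the second stage as restriction-plus-$\y$.
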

\begin{proof}
We think of sampling a $z \sim \{0,1\}^n$ as a two step process of first sampling a $\tilde{f}$, a $\lfloor \log (\frac n 2) \rfloor$ bit Boolean function and fixing the first $\lfloor \frac n 2 \rfloor$ bits of $z$ to be the truth-table of $\tilde{f}$ and then we randomly assign values to the remaining $\lfloor \frac n 2 \rfloor$ bits of $z$. Call these later $\lfloor \frac n 2 \rfloor$ bit string to be $\x_{other}$. 
\begin{align*}
\mathbb{P}_{\z \sim \{0,1\}^n}[C(z) = A_n(z)] &= \mathbb{E}_{\z \sim \{0,1\}^n}[\mathfrak{1}_{C(z) = A_n(z)} ]\\
&=\mathbb{E}_{\z \sim \{0,1\}^n}[\mathfrak{1}_{C(z) = A_n(z)} \mathfrak{1}_{\tilde{f} \in F^*} ] + \mathbb{E}_{\z \sim \{0,1\}^n}[\mathfrak{1}_{C(z) = A_n(z)} \mathfrak{1}_{\tilde{f} \notin F^*}]\\
&= \mathbb{P}_{\z \sim \{0,1\}^n}[(C(z) = A_n(z))\cap(\tilde{f} \in F^*)]+ \mathbb{P}_{\z \sim \{0,1\}^n}[(C(z) = A_n(z))\cap(\tilde{f} \notin F^*)]\\
&= \mathbb{P}_{\z \sim \{0,1\}^n}[(C(z) = A_n(z))\mid (\tilde{f} \in F^*)]\mathbb{P}_{\z \sim \{0,1\}^n} [\tilde{f} \in F^* ]\\
&+ \mathbb{P}_{\z \sim \{0,1\}^n}[(C(z) = A_n(z))\cap(\tilde{f} \notin F^*)]\\
&\leq \mathbb{P}_{\z \sim \{0,1\}^n}[(C(z) = A_n(z))\mid (\tilde{f} \in F^*)] + \mathbb{P}_{\z \sim \{0,1\}^n}[\tilde{f} \notin F^*]\\
&\leq \mathbb{P}_{\z \sim \{0,1\}^n}[(C(z) = A_n(z))\mid (\tilde{f} \in F^*)] + \frac \epsilon 3\\
\end{align*}
In the last line above we have invoked equation \ref{useofs}. Now we note that sampling the $n$ bit string $z$ such that $\tilde{f} \in F^*$ is the same as doing a random restriction of the type $\rho(\tilde{f})$ and then randomly picking a  $\lfloor \log (\frac n 2) \rfloor$ bit string say $\y$. So we can rewrite the last inequality as, 

\begin{align*}
\mathbb{P}_{\z \sim \{0,1\}^n}[C(z) = A_n(z)]&\leq \mathbb{P}_{(\rho(\tilde{f}),\y)}[C(\rho(\tilde{f}),\y) = A_n(\rho(\tilde{f}),\y)] + \frac \epsilon 3\\
&\leq \mathbb{E}_{(\rho(\tilde{f}),\y)}[\mathfrak{1}_{C(\rho(\tilde{f}),\y) = A_n(\rho(\tilde{f}),\y)} \mid (\tilde{f} \in F^*)] + \frac \epsilon 3\\
&\leq \mathbb{E}_{(\rho(\tilde{f}),\y)}[\mathfrak{1}_{C(\rho(\tilde{f}),\y) = A_n(\rho(\tilde{f}),\y)}\mathfrak{1}_{\text{Size of } C\vert_{\rho(\tilde{f})} < s(n,\epsilon)} \mid (\tilde{f} \in F^*)]\\
&+ \mathbb{E}_{(\rho(\tilde{f}),\y)}[\mathfrak{1}_{C(\rho(\tilde{f}),\y) = A_n(\rho(\tilde{f}),\y)}\mathfrak{1}_{\text{Size of } C\vert_{\rho(\tilde{f})} \geq s(n,\epsilon)}\mid (\tilde{f} \in F^*)] + \frac \epsilon 3\\
&\leq \mathbb{P}_{(\rho(\tilde{f}),\y)}[C(\rho(\tilde{f}),\y) = A_n(\rho(\tilde{f}),\y) \mid \left ( (\text{Size of } C\vert_{\rho(\tilde{f})} < s(n,\epsilon)) \cap  (\tilde{f} \in F^*) \right )]\\
&+\mathbb{P}_{(\rho(\tilde{f}),\y)}[\text{Size of} C\vert_{\rho(\tilde{f})} \geq s(n,\epsilon)\mid (\tilde{f} \in F^*)] + \frac \epsilon 3\\
&\leq \left ( \frac 1 2 + \frac \epsilon 3\right ) + \frac {w(n,\epsilon)(1-\eta)}{s(n,\epsilon)} + \frac \epsilon 3
\end{align*}
In the last step above we have used equations \ref{Anisrandom} and \ref{useofeta2}. 
\end{proof}
~\\
So after putting back the values of $\eta$ and the largest scaling of $s(n,\epsilon)$ that we can have (from equation \ref{smaass}), the upperbound on the above probability becomes, 
\begin{align*}
\frac{1}{2} + \frac{2\epsilon}{3} + O \Bigg ( \frac{w(n,\epsilon)\log (n) }{\sqrt{n} (\frac{\epsilon^{\frac{2}{2+\theta}}n^{\frac{1}{2+\theta}}}{2^{\frac{1}{2+\theta}}\log (\frac{n}{2})})} \Bigg ) 
\end{align*}
~\\
Thus the probability is upperbounded by $\frac{1}{2} + \epsilon$ as long as $w(n,\epsilon) = O \Bigg ( \frac{\epsilon^{1+\frac{2}{2+\theta}} n^{\frac{1}{2} + \frac{1}{2+\theta}} \log \Big ( \frac{n}{2} \Big ) }{\log (n) }\Bigg )$

~\\ 
Stated as a lowerbound we have that if a LTF-of-ReLU has to match the $n-$bit Andreev function on more than $\frac{1}{2}+\epsilon$ fraction of the inputs for $\epsilon > \sqrt{\frac{2\log^{2+\theta} (\frac{n}{2})}{n}}$ for some $\theta >0$ (asymptotically this is like having a constant $\epsilon$) then the LTF-of-ReLU needs to be of size $\Omega(\epsilon^{\frac{4+\theta}{2+\theta}}n^{\frac{1}{2} + \frac{1}{2+\theta}})$. Now we define $\delta \in (0,\frac 1 2)$ such that $\delta = \frac {\theta}{2(2+\theta)}$ and that gives the form of the almost linear lowerbound as stated in the theorem. 
\end{proof}

\section{Smaller upper bounds on the sign-rank of LTF-of-(ReLU)$^{\mathbf d-1}$ with weight restrictions only on the bottom most layer (Proof of Theorem \ref{deepReLU})}

For a $\{-1,1\}^M \rightarrow \{-1,1\}$ LTF-of-ReLU circuit with any given weights on the network the inputs to the threshold function of the top LTF gate are some set of $2^M$ real numbers (one for each input). Over all these inputs let $p>0$ be the distance from $0$ of the largest negative number on which the LTF gate ever gets evaluated. Then by increasing the bias at this last LTF gate by a quantity less then $p$ we can ensure that no input to this LTF gate is $0$ while the entire circuit still computes the same Boolean function as originally. So we can assume without loss of generality that the input to the threshold function at the top LTF gate is never $0$. We also recall that the weights at the bottom most layer are constrained to be integers of magnitude at most $W>0$. 

~\\ 
Let this depth $d$ LTF-of-(ReLU)$^{d-1}$ circuit map $\{-1,1\}^{m}\times\{-1,1\}^{m} \rightarrow \{-1,1\}$. Let $\{w_{k}\}_{k=1}^{d-1}$ be the widths of the ReLU layers at depths indexed by increasing $k$ with increasing distance from the input. Thus, the output LTF gate gets $w_{d-1}$ inputs; the $j$-th input, for $j=1,2,..,w_{d-1}$, is the output of a circuit $C_j$ of depth $d-1$ composed of only ReLU gates. Let $f_j(\x,\y):\{-1,1\}^{m}\times\{-1,1\}^{m} \rightarrow \R$ be the pseudo-Boolean function implemented by $C_j$. 
Thus the output of the overall LTF-of-(ReLU)$^{d-1}$ circuit is, 
\begin{align}\label{top} 
 f(\x,\y):=\text{LTF} \left [ \beta + \sum_{j=1}^{w_{d-1}} \alpha_j f_j(\x,\y)\right]
\end{align}

\begin{lemma}\label{ReLU-circuit-rank}
Let $k \geq 0$ and $w_1, \ldots, w_k \geq 1$ be natural numbers. Consider a circuit with $2m$ inputs and a single output, consisting of only ReLU gates of depth $k+1$ with $w_i$ ReLU gates at each depth, with $i=1$ corresponding to the layer closest to the input (note that single output ReLU gate is not counted here). We restrict the inputs to $\{-1,1\}^m \times \{-1,1\}^m$, so the circuit implements a pseudo-Boolean function $g: \{-1,1\}^m \times \{-1,1\}^m\to \R$. 
~\\
Assume that the weights of the $w_1$ ReLU gates in the layer closest to the input are restricted as per Definition~\ref{def:weight-restriction}. Define the $2^m\times 2^m$ matrix $G(\x,\y)$ whose rows and columns are indexed by $(\x,\y) \in \{-1,1\}^m \times \{-1,1\}^m$ as $$G(\x,\y) = g(\x,\y).$$ Then $G$ has a block structure, where the rows and columns can be partitioned {\em contiguously} into $O\big((\prod_{i=1}^k w_i)(mW)\big)$ blocks (thus, $G$ has $O\big((\prod_{i=1}^k w_i)^2(mW)^2\big)$ blocks), and within each block $G$ is constant valued.
\end{lemma}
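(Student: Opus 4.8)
The idea is to extract the block structure entirely from the bottom layer; the layers above it contribute nothing beyond being a fixed function of the bottom-layer outputs. Write the $w_1$ bottom-layer gates as $h_\ell(\x,\y)=\max\{0,\,b_\ell+\langle\a_\ell,\x\rangle+\langle\c_\ell,\y\rangle\}$ for $\ell=1,\dots,w_1$, where $\a_\ell,\c_\ell\in\Z^m$ have entries of magnitude at most $W$, and where the weight restriction of Definition~\ref{def:weight-restriction} provides a permutation $\sigma$ of $\{1,\dots,2^m\}$ such that the $\x$-part $\a_\ell$ and the $\y$-part $\c_\ell$ of every bottom gate's weight vector both lie in $P_{m,\sigma}$. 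I would index the rows of $G$ by $\x^{\sigma(1)},\dots,\x^{\sigma(2^m)}$ and the columns by $\y^{\sigma(1)},\dots,\y^{\sigma(2^m)}$; both orderings are \emph{contiguous} in the sense required by the statement. (When $k=0$ the single output ReLU gate plays the role of the bottom layer and the same argument applies with $w_1$ replaced by $1$.)

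The first step is to control a single bottom gate. For fixed $\ell$, the map $t\mapsto\langle\a_\ell,\x^{\sigma(t)}\rangle$ is nondecreasing in $t$ --- this is precisely the chain of inequalities defining $P_{m,\sigma}$ --- and it is integer valued, lying in $[-mW,mW]$ since $\|\a_\ell\|_\infty\le W$ and $\x\in\{-1,1\}^m$. Hence it is constant on each of at most $2mW+1$ maximal contiguous runs of the row order. Taking the common refinement over $\ell=1,\dots,w_1$ of these $w_1$ partitions, each into at most $2mW+1$ contiguous intervals, cuts the rows in at most $2w_1 mW$ places and so partitions them into contiguous intervals $I_1,\dots,I_R$ with $R\le 2w_1 mW+1$, on each of which the whole tuple $(\langle\a_1,\x\rangle,\dots,\langle\a_{w_1},\x\rangle)$ is constant. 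Applying the same reasoning to the vectors $\c_\ell$ partitions the columns into contiguous intervals $J_1,\dots,J_S$ with $S\le 2w_1 mW+1$ on which $(\langle\c_1,\y\rangle,\dots,\langle\c_{w_1},\y\rangle)$ is constant.

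Now fix a block $I_r\times J_s$. For every $(\x,\y)$ in it, the affine argument $b_\ell+\langle\a_\ell,\x\rangle+\langle\c_\ell,\y\rangle$ of the $\ell$-th bottom gate is a sum of three quantities each constant on the block, so $h_\ell$ is constant on $I_r\times J_s$ for every $\ell$; thus the vector $(h_1(\x,\y),\dots,h_{w_1}(\x,\y))$ of bottom-layer outputs is constant on the block. The rest of the circuit --- the ReLU layers $2,\dots,k$ together with the single output ReLU gate --- computes a fixed function $\Phi\colon\R^{w_1}\to\R$ of these $w_1$ values, so $g=\Phi(h_1,\dots,h_{w_1})$ is constant on $I_r\times J_s$. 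Since $R,S=O(w_1\,mW)$ and $w_1\le\prod_{i=1}^k w_i$, this exhibits a contiguous partition of the rows and of the columns of $G$ into $O\big((\prod_{i=1}^k w_i)(mW)\big)$ pieces each, hence into $O\big((\prod_{i=1}^k w_i)^2(mW)^2\big)$ blocks on which $G$ is constant, as claimed.

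The only genuinely load-bearing step is the second one: using the cone $P_{m,\sigma}$ to obtain \emph{one} contiguous ordering of the $2^m$ hypercube vertices that simultaneously makes each linear form $\langle\a_\ell,\cdot\rangle$ monotone, and then checking that the common refinement of $w_1$ interval partitions into at most $2mW+1$ pieces still has only $O(w_1\,mW)$ pieces --- this is where integrality of the weights and the bound $W$ are both used. Everything after that is the elementary observation that in a layered circuit each layer above the bottom is a deterministic function of the bottom-layer outputs, so constancy of those outputs on a block forces constancy of $g$. The remaining care is pure bookkeeping: the (notational) identification of the $2m$ input bits with $\{-1,1\}^m\times\{-1,1\}^m$, and the $k=0$ base case.
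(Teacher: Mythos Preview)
Your proof is correct and takes a genuinely different route from the paper. The paper argues by induction on the depth $k$: the base case is a single ReLU gate, and the inductive step uses an auxiliary claim that the sum of $w$ matrices, each contiguously partitioned into $D$ row- and column-blocks, is itself contiguously partitioned into $w(D-1)+1$ blocks; this is how the factor $\prod_{i=1}^k w_i$ accumulates. Your argument instead observes once and for all that the entire circuit is a fixed function $\Phi$ of the $w_1$ bottom-layer outputs, so the block structure is determined entirely at layer~1. This is both shorter (no induction, no auxiliary claim) and sharper: you actually establish the stronger bound $O(w_1\,mW)$ on the number of row- and column-intervals, which only at the end is relaxed to $O\big((\prod_i w_i)\,mW\big)$ to match the stated lemma. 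The paper's inductive framework would be the more natural choice if intermediate layers also carried structural restrictions one wished to exploit, but under the present hypotheses your direct factoring through the bottom layer is the cleaner argument.
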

~\\
Before we prove the Lemma, let us see why it implies Theorem~\ref{deepReLU}. Let $F_j(\x,\y)$ be the matrix obtained from the ReLU circuit outputs $f_j(\x,\y)$ from~\eqref{top}, and let $F(\x,\y)$ be the matrix obtained from $f(\x,\y)$. Let $J_{2^m\times 2^m}$ be the matrix of all ones. Then

\begin{align*}
\text{sign-rank}(F(\x,\y))
= &  \;\text{sign-rank} \left ( \text{sign} \left [ \beta J_{2^m\times 2^m} + \sum_{j=1}^{w_{d-1}} \alpha_j F_j(\x,\y) \right] \right )\\
\leq &\; \text{rank} \left ( \beta J_{2^m\times 2^m} + \sum_{j=1}^{w_{d-1}} \alpha_j F_j(\x,\y) \right )\\
\leq &\; 1 + \sum_{j=1}^{w_{d-1}}\text{rank}(F_j(\x,\y))\\
= & O\left (\left ( \prod_{k=1}^{d-1} w_k\right )^2 (mW)^2\right )
\end{align*}
where the first inequality follows from the definition of sign-rank, the second inequality follows from the subadditivty of rank and the last inequality is a consequence of Lemma~\ref{ReLU-circuit-rank}. Indeed, a matrix with block structure as in the conclusion of Lemma~\ref{ReLU-circuit-rank} has rank at most $O\big((\prod_{i=1}^k w_i)^2(mW)^2\big)$ by expressing it as a sum of these many matrices of rank one and using subaddivity of rank.

~\\
Now we recall that the Arkadev-Nikhil function $g$ (which is linear sized depth $2$ LTF) on $2m = 2(n^{\frac{4}{3}}-n\log n)$ bits has sign-rank $\Omega(2^{n^{\frac{1}{3} - 2\log n}})$. It follows that $n^{\frac{4}{3}} \geq m$ and for any constant $C$ s.t $C \in (0,1)$ for large enough $n$ we would have, $\text{sign-rank}(g) = \Omega(2^{Cn^{\frac{1}{3}}}) = \Omega(2^{Cm^{\frac{1}{4}}})$. From the above upper bound on the sign-rank of our bottom layer weight restricted LTF-of-(ReLU)$^{d-1}$ with widths $\{w_k\}_{k=1}^{d-1}$ it follows that for this to represent this Arkadev-Nikhil function it would need, $\left (\left ( \prod_{k=1}^{d-1} w_k\right )^2 (mW)^2\right ) = \Omega(m^{\frac{1}{4}})$. Hence it follows that the size ($1+\sum_{k=1}^{d-1}w_i$) required for such  LTF-of-(ReLU)$^{d-1}$ circuits to represent the Arkadev-Nikhil function is $\Omega \left ( (d-1) \frac{2^{\frac {m^{\frac 1 8}}{ d-1}}}{(mW)^{\frac 1 {d-1}}}\right )$.
~\\
The statement about LTF circuits is a straightforward consequence of the above result and Claim~\ref{LLbyLR} in Appendix~\ref{sec:LTF-ReLU} which says that any LTF gate can be simulated by 2 ReLU gates.
~\\
We now prove Lemma~\ref{ReLU-circuit-rank}.

\begin{proof}[Proof of Lemma~\ref{ReLU-circuit-rank}]
We will prove this Lemma by induction on $k$. 

\paragraph{The base case of the induction $k=0$: A single ReLU gate.} A single ReLU gate's output is given by $\max\{0, \langle \a^1, \x\rangle + \langle \a^2,\y\rangle + b\}$, where $\a^1, \a^2 \in \R^m$ and $b \in \R$. Since the entries of $\a^1, \a^2$ and $b$ are assumed to be integers bounded by $W >0$, the terms $\langle \a^1, \x\rangle$ and $\langle \a^2,\y\rangle$ can each take at most $O(mW)$ different values, since $\x, \y \in \{-1,1\}^m$. So we can arrange the rows and columns in increasing order of $\langle \a^1,\x\rangle$ and $\langle \a^2,\y\rangle$ and then partition the rows and columns contiguously according to these values, and the base case is proved.

\paragraph{The induction step.} We first make a simple claim about the sum of matrices which are block wise constant.

\begin{claim}\label{claim:block-matrices}
Let $w, M, D$ be fixed natural numbers. Let $A_1, \ldots, A_w$ be any $M\times M$ matrices such that for each $A_i$ the rows and columns can be partitioned contiguously into $D$ blocks (not necessarily equal in size), such that $A_i$ is constant valued within each of the $D^2$ blocks. Then $A := A_1 + \ldots + A_w$ is an $M\times M$ matrix whose rows and columns can be partitioned contiguously into $w(D-1) + 1$ blocks such that $A$ is constant valued within each block defined by this partition of the rows and columns.
\end{claim}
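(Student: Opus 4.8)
The plan is to reduce everything to a statement about \emph{common refinements} of contiguous partitions of the index set $\{1,\dots,M\}$. Encode a contiguous partition of $\{1,\dots,M\}$ into at most $D$ blocks by its set of \emph{breakpoints}: the indices $j\in\{1,\dots,M-1\}$ for which $j$ and $j+1$ lie in different blocks. A contiguous partition into $D$ blocks has at most $D-1$ breakpoints (exactly $D-1$ if every block is nonempty), and conversely any subset of $\{1,\dots,M-1\}$ of size $t$ determines a contiguous partition into $t+1$ blocks. So for each $i=1,\dots,w$ I would record the breakpoint set $R_i\subseteq\{1,\dots,M-1\}$ of the row partition witnessing the block structure of $A_i$, and the breakpoint set $C_i$ of its column partition; both have size at most $D-1$.

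First I would set $R:=\bigcup_{i=1}^w R_i$ and $C:=\bigcup_{i=1}^w C_i$, so that $|R|\le w(D-1)$ and $|C|\le w(D-1)$, and let $\mathcal{R}$ (resp. $\mathcal{C}$) be the contiguous partition of the rows (resp. columns) whose breakpoint set is $R$ (resp. $C$); each has at most $w(D-1)+1$ blocks. Next I would observe that $\mathcal{R}$ refines the row partition of every $A_i$ and $\mathcal{C}$ refines the column partition of every $A_i$ — immediate from $R_i\subseteq R$ and $C_i\subseteq C$ — so every rectangular block $P\times Q$ with $P\in\mathcal{R}$, $Q\in\mathcal{C}$ is contained in a single rectangular block of each $A_i$, on which $A_i$ is constant by hypothesis. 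Since each summand $A_i$ is constant on $P\times Q$, so is $A=A_1+\dots+A_w$, which is exactly the asserted block structure with $w(D-1)+1$ blocks on each side (pad $R,C$ with extra breakpoints if one insists on exactly that many).

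There is no substantial obstacle here: the argument is bookkeeping about unions of breakpoint sets, and the only points requiring a little care are (i) bounding the number of breakpoints by $D-1$ rather than $D$, allowing for possibly empty blocks, and (ii) spelling out that a simultaneous refinement of the row and column partitions induces a refinement of the rectangular block partition, so that constancy on the coarse blocks is inherited by the fine ones. This Claim is then the combinatorial engine of the induction step of Lemma~\ref{ReLU-circuit-rank}: applied to the $w_k$ matrices produced (via the inductive hypothesis on the depth-$k$ subcircuits, whose bottom-layer weights are restricted as in Definition~\ref{def:weight-restriction}) at the top hidden ReLU layer, it shows that their weighted combination, together with the additive constant coming from the bias, is block-constant with the block count growing by a factor of order $w_k$; composing with the final ReLU gate then preserves this partition, since applying any fixed function pointwise to a block-constant matrix leaves it block-constant.
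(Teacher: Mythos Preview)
Your proof is correct and is essentially identical to the paper's own argument: both encode a contiguous partition into $D$ blocks by its $D-1$ dividers (what the paper calls ``lines'' and you call ``breakpoints''), take the union of the $w$ divider sets to get at most $w(D-1)$ dividers and hence $w(D-1)+1$ blocks, and observe that the resulting common refinement makes every summand, and therefore the sum, constant on each block. Your write-up is in fact a bit more careful than the paper's about the ``at most'' versus ``exactly'' distinction and about spelling out why refinement preserves block-constancy, but the underlying idea is the same.
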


\begin{proof} The partition of the rows of $A_i$ into $D$ contiguous blocks is equivalent to a choice of $D-1$ lines out of $M-1$ lines. When we sum the matrices, the refined partition in the sum is a selection of $w(D-1)$ lines out of $M-1$ lines, giving us $w(D-1) + 1$ contiguous blocks. The same argument holds for the columns.
\end{proof}
~\\
To complete the induction step, we observe that a ReLU circuit with depth $k+1$ layers can be seen as computing $g(\x, \y) = \max\{0, b + \sum_{i=1}^{w_{k}} a_jg_i(\x,\y)\},$ where $g_i(\x,\y)$ is the output of a ReLU circuit of depth $k$. Thus, the corresponding matrices satisfy $G(\x, \y) = \max\{0, bJ_{2^m\times 2^m} + \sum_{i=1}^{w_{k}} a_jG_i(\x,\y)\}$, where $J_{2^m\times 2^m}$ is the matrix of all ones, and the ``max'' is taken entrywise. the induction hypothesis tells us that the rows and columns each matrix $G_i$ can be partitioned contiguously into $O\big((\prod_{i=1}^{k-1} w_i)(mW)\big)$ such that $G_i$ is constant valued within each block. Thus, by Claim~\ref{claim:block-matrices}, the rows and columns of the matrix $bJ_{2^m\times 2^m} + \sum_{i=1}^{w_{k}} a_jG_i(\x,\y)$ can be partitioned into $O\big((\prod_{i=1}^k w_i)(mW)\big)$ contiguous blocks.
\end{proof}
\section{Acknowledgements}
We would like to thank Aurko Roy (Google Brain, San Francisco Bay Area) for extensive discussions on the methods used and the questions addressed in this work. We also thank Nikhil Mande (TIFR), Piyush Srivastava (TIFR) and Xin Li (JHU) for helpful conversations on circuit complexity. Amitabh Basu and Anirbit Mukherjee gratefully acknowledge support from the NSF grant CMMI1452820.

\bibliographystyle{abbrv}
\bibliography{references}

\appendix 

\section{Proof of Proposition~\ref{thm:max-0-x-y}}\label{sec:proof-max-0-x-y}

We first observe that the set of points where $\max\{0,x_1,x_2\}$ is not differentiable is precisely the union of the three half-lines (or rays) $\{(x_1,x_2): x_1=x_2, x_1\geq 0\} \cup \{(0,x_2): x_2 \leq 0\} \cup \{(x_1,0): x_1 \leq 0\}$. 
On the other hand, consider any Sum-of-ReLU circuit, which can be expressed as a function of the form
$$f(x) = \sum_{i=1}^w c_i\max\{0,\langle a^i, x \rangle + b_i\},$$ where $w\in \N$ is the number of ReLU gates in the ciruit, and $a^i \in \R^2$, $b_i, c_i \in \R$ for all $i=1, \ldots, w$. This implies that $f(x)$ is piecewise linear and the set of points where $f(x)$ is not differentiable is {\em precisely} the union of the $w$ lines $\langle a^i, x \rangle + b_i = 0$, $i=1, \ldots, w$. 
Since a union of lines cannot equal the union of the three half-lines $\{(x_1,x_2): x_1=x_2, x_1\geq 0\} \cup \{(0,x_2): x_2 \leq 0\} \cup \{(x_1,0): x_1 \leq 0\}$, we obtain the consequence that $\max\{0,x_1,x_2\}$ cannot be represented by a Sum-of-ReLU circuit, no matter how many ReLU gates are used.

\section{Simulating an LTF gate by a ReLU gate}\label{sec:LTF-ReLU}

\begin{claim}\label{LLbyLR}
Any LTF gate $\{-1,1\}^n \rightarrow \{-1,1\}$ can be simulated by a Sum-of-ReLU circuit with at most $2$ ReLU gates. 
\end{claim}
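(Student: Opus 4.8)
The plan is to exploit the one structural fact that distinguishes Boolean inputs from real ones: on the discrete set $\{-1,1\}^n$ the affine form driving the LTF gate takes only finitely many values, so it is bounded away from $0$, and on such a ``gapped'' domain the sign function can be realized as a difference of two ReLUs plus a constant. Concretely, write the LTF gate as $\x \mapsto \operatorname{sign}(\langle \w, \x\rangle + b)$ with values in $\{-1,1\}$ (convention $\operatorname{sign}(0) = 1$). First I would observe that the finite set $\{\langle \w, \x\rangle + b : \x \in \{-1,1\}^n\}$ either avoids $0$, or, if it contains $0$, we may replace $b$ by $b' := b - \tau$ for a positive $\tau$ smaller than its least strictly positive element; this does not change the Boolean function computed. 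In either case we obtain a margin $\gamma := \min_{\x \in \{-1,1\}^n} |\langle \w, \x\rangle + b'| > 0$, and for every input either $\langle \w, \x\rangle + b' \geq \gamma$ (the gate outputs $+1$) or $\langle \w, \x\rangle + b' \leq -\gamma$ (the gate outputs $-1$).

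Next I would write down the two-ReLU circuit explicitly. Take the gates
$$g_1(\x) = \max\{0,\ \langle \w, \x\rangle + b' + \gamma\}, \qquad g_2(\x) = \max\{0,\ \langle \w, \x\rangle + b' - \gamma\},$$
and form the output node $\x \mapsto -1 + \tfrac{1}{\gamma}\,g_1(\x) - \tfrac{1}{\gamma}\,g_2(\x)$. Setting $t := \langle \w, \x\rangle + b'$, the verification is a two-case check: if $t \geq \gamma$ then both ReLU arguments are nonnegative, so $g_1(\x) = t+\gamma$, $g_2(\x) = t - \gamma$, and the output equals $-1 + \tfrac{1}{\gamma}(2\gamma) = 1$; if $t \leq -\gamma$ then both ReLU arguments are nonpositive, so $g_1(\x) = g_2(\x) = 0$ and the output equals $-1$. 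Thus the circuit outputs $\operatorname{sign}(t)$ on every vertex, i.e. exactly the original LTF value (the degenerate case of a constant gate is subsumed, since then one of the two cases never occurs), and it uses precisely two ReLU gates.

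There is no genuinely hard step here; the only points requiring care are bookkeeping. The first is the ``$=0$'' case, dispatched by the perturbation above — and this is legitimate \emph{only} because the domain is discrete, which is exactly the hypothesis being used. The second is the additive constant $-1$ in the output node: the weighted-sum output of a Sum-of-ReLU circuit is permitted a bias term (the same $\beta$ that appears in the output gates elsewhere in the paper), so the stated bound of two gates is attained; if one insisted on a bias-free output sum, the constant could be produced by one further ReLU with zero weight vector and bias $1$, scaled by $-1$, giving at most three gates. I would present the bias version to meet the claimed bound of $2$.
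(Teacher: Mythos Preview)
Your approach is essentially the paper's: both realize the LTF as a three-piece ``clipped ramp'' $t\mapsto -1$ / linear / $+1$ built from two ReLUs with hinges at the boundary of the gap, precomposed with the affine form. The paper places the hinges asymmetrically at $-p$ and $0$ (where $-p$ is the negative value of $\langle \a,\x\rangle+b$ closest to zero) and appeals to an earlier corollary for the two-ReLU realization, whereas you symmetrize to $\pm\gamma$ and write the formula out explicitly; this is a cosmetic difference, and your self-contained verification is arguably cleaner.

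One small slip to fix: in your perturbation for the boundary case you set $b':=b-\tau$, but under the convention $\operatorname{sign}(0)=+1$ this flips any input with $\langle\w,\x\rangle+b=0$ from $+1$ to $-1$. You want $b':=b+\tau$ with $\tau$ strictly smaller than $\min\{\,|\langle\w,\x\rangle+b|:\langle\w,\x\rangle+b<0\,\}$, so that zeros become positive while strictly negative values stay negative. (Alternatively, just copy the paper's asymmetric hinges at $-p$ and $0$ and avoid the perturbation altogether.) With that correction the argument is complete.
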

\begin{proof}
Given a LTF gate $(2{\mathbf 1}_{\langle a, x\rangle + b \geq 0}-1)$ it separates the points in $\{-1,1\}^n$ into two subsets such that the plane $\langle a, x\rangle + b = 0$ is a separating hyperplane between the two sets. Let $-p<0$ be the value of the function $\langle a, x\rangle + b$ at  that hypercube vertex on the ``-1'' side which is closest to this separating plane. Now imagine a continuous piecewise linear function $f : \mathbb{R} \rightarrow \mathbb{R}$ such that $f(x) =-1$ for $x \leq -p$, $f(x) = 1$ for $x \geq 0$ and for $x \in (-p,0)$ $f$ is the straight line function connecting $(-p,-1)$ to $(0,1)$. It follows from Corollary $3.1$ of our previous work, ~\cite{arora2016understanding} that this $f$ can be implemented by a $\mathbb{R} \rightarrow \mathbb{R}$ Sum-of-ReLU with at most $2$ ReLU gates hinged at the points $-p$ and $0$ on the domain. Because the affine transformation $\langle a, x\rangle + b$ can be implemented by the wires connecting the $n$ input nodes to the layer of ReLUs it follows that there exists a $\mathbb{R}^n \rightarrow \mathbb{R}$ Sum-of-ReLU with at most $2$ ReLU gates implementing the function $g(x) = f(\langle a, x\rangle + b) : \mathbb{R}^n \rightarrow \mathbb{R}$. Its clear that $g(\x) = \text{LTF}(\x)$ for all $\x \in \{-1,1\}^n$. 
\end{proof}

\section{PARITY on $k-$bits can be implemented by a $O(k)$ Sum-of-ReLU circuit}\label{sec:Parity-ReLU}

For this proof its convenient to think of the PARITY function as the following map, 
\begin{align*}\label{parity}
\text{PARITY} : \{0,1\}^k &\rightarrow \{0,1\}\\
\x &\mapsto \left (\sum_{i=1}^k x_i \right )\mod 2
\end{align*}
Its clear that that in the evaluation of the PARITY function as stated above the required sum over the coordinates of the input Boolean vector will take as value every integer in the set, $\{0,1,2,..,k\}$. The PARITY function can then be lifted to a $f :\R \rightarrow \R$ function such that, $f(y) = 0$ for all $y \leq 0$, $f(y) = y \mod 2$ for all $y \in {1,2,..,k}$, $f(y) = k \mod 2$ for all $y > k$ and for any $y \in (p,p+1)$ for $p \in \{0,1,..,k-1\}$ $f$ is the straight line function connecting the points, $(p, p \mod 2)$ and $(p +1, (p+1) \mod 2)$. Thus $f$ is a continuous piecewise linear function on $\R$ with $k+2$ linear pieces. Then it follows from Theorem $2.3$ of our previous work, ~\cite{arora2016understanding} that this $f$ can be implemented by a $\mathbb{R} \rightarrow \mathbb{R}$ Sum-of-ReLU circuit with at most $k+1$ ReLU gates hinged at the points $\{0,1,2,..,k\}$ on the domain. The wires from the $k$ inputs of the ReLU gates can implement the linear function $\sum_{i=1}^k x_i$. Thus it follows that there exists a $\R^k \rightarrow \R$ Sum-of-ReLU circuit (say C) such that, $\text{C}(\x) = \text{PARITY}(\x)$ for all $\x \in \{0,1\}^k$.  

\end{document}

\section{OLD SIGN RANK}

\newpage 
\section{Linear in width and quadratic in dimension upperbound on the sign-rank of LTF-of-ReLU with weight restrictions on the bottom layer (Proof of Theorem \ref{LTFR})}

\begin{lemma}\label{LTFReLUrank}
Let $f : \{-1,1\}^{2m} \rightarrow \{-1,1\}$ be a Sign-of-q-ReLUs {\color{red}Do you mean LTF-of-q-ReLUs? If so, our notation is LTF-of-ReLU circuit with $q$ ReLU gates} with the top gate using unrestricted weights in its affine function argument. Let each of the bottom gates use weights of absolute value at most $W$. Then corresponding to any possible way of partitioning the input to $f$ as $\{-1,1\}^m \times \{-1,1\}^m$ there is a real matrix $F$ with its rows and columns indexed by $m$-bit strings such that, $\text{sign}(F(x,y)) = f(x,y)$ for all $x,y \in \{-1,1\}^m$ and $\text{rank}(F) \leq 1 + \frac{q}{2}(1+W(1+2m))( 2 + W(1 + 2m)) = O(q(mW)^2)$
\end{lemma}

\begin{proof}
Let the $i^{th}$ ReLU in the bottom layer (say $R_i$) be computing a function of the form $R_i(x) = \max \{0, c_i + \langle \a_i, x \rangle \}$ for some $x = (x_1,x_2) \in \{-1,1\}^m \times \{-1,1\}^m$ and $\a_i \in \R^{2m}$. Corresponding to this chosen partition of the input into two $2$ $m$ bit strings, we can think of $\a_i$ also as a concatenation of a pair of $m-$dimensional vectors as, $\a_i = (\a_{i1},\a_{i2}) \in \R^m \times \R^m$.  Then each ReLU gate can be thought of as a bivariate function, 
\[ R_i(x_1,x_2) = \max \{0, c_i + \langle \a_{i1}, x_1 \rangle + \langle \a_{i2}, x_2 \rangle \} \]
So for some $q+1$ tuple of real numbers, $(t,w_1,w_2,..,w_q)$ we have the output of the network given as, 
\[ f(x) = f(x_1,x_2) = \text{sign} \left [ t + \sum_{i=1}^q w_i R_i(x_1,x_2) \right ] \]
~\\
Let $J$ be the $2^m \times 2^m$ dimensional matrix of all $1$s with its rows and columns indexed by the bit strings of size $m$. Now let us consider the following function, 
\[ F(x_1,x_2) = tJ(x_1,x_2) + \sum_{i=1}^q w_i R_i(x_1,x_2)\]
Then its clear that,
\[ f(x) = f(x_1,x_2) = \text{sign}\left [ F(x_1,x_2) \right ]  \] 
~\\
By the subadditivity of rank and Theorem \ref{ReLU_rank} we have, 
\[ \text{rank}(F) \leq 1 + \frac{q}{2}(1+W(1+2m))( 2 + W(1 + 2m)) = O(q(mW)^2)\]
\end{proof} 

{\color{red}Complete the proof by refering to Theorem~\ref{thm:arkadev-nikhil}.}

\newpage 
\section{Polynomial upperbound on the sign-rank of LTF-of-ReLU-of-ReLU with weight restrictions on the bottommost layer  (Proof of Theorem \ref{LTFRR})}

Let this depth $3$ circuit map $\{-1,1\}^{2m}\rightarrow \{-1,1\}$ and let it have $w_1$ and $w_2$ ReLU gates in its two layers. For indices $i =1,2,..,w_1$ and $j=1,2,..,w_2$ the circuit is parametrized by quantities $b_{j}, b_{i},\alpha_j,\beta \in \R$ and $\a_j \in \R^{w_1}$ and $\a_i \in \R^{2m}$ {\color{red}poor choice of notation}. Let the gates in the topmost ReLU layer be labeled as $\{R_{2j}\}_{j=1}^{w_2}$ and they be of the form $R_{2j} =\max \{0, b_{j} + \langle \a_j , \y \rangle \}$ where $\y \in \R^{w_1}$ is the output of the bottom layer of $w_1$ ReLU gates. Let the bottom layer ReLU gates be labeled as $\{R_{1i}\}_{i=1}^{w_1}$ and they be of the form $R_{1i} =\max \{0, b_{i} + \langle \a_i , \x \rangle \}$ where $\x \in \{-1,1\}^{2m}$ is the input vector. . 
~\\ \\
The output of this circuit is given as, 
\[ \text{LTF} \left [ \beta + \sum_{j=1}^{w_2} \alpha_j \max \{0, b_{j} + \langle \a_j , \y \rangle \} \right] = \text{LTF} \left [ \beta + \sum_{j=1}^{w_2} \alpha_j \max \left \{0, b_{j} + \sum_{i=1}^{w_1}  (\a_j)_i \max \{0, b_{i} + \langle \a_i , \x \rangle \} \right \} \right] \]
~\\ \\
When the $\a_i$s and $b_i$s are restricted to be integers of magnitude atmost $W$ then we know from the previous modification of Lemma $4.26$ of Lokam's book {\color{red} what previous modification do you speak of? This is the first reference to this LEmma in Lokam's book.} that each of the $2^m$ dimensional matrices $[R_{1i}]$ is made of $O(m^2 W^2)$ submatrices each having a constant value for its entries. These arise out of a partition of the rows and the columns of $[R_{1i}]$ into $O(mW)$ groups. So it follows that each of the $w_2$ matrices, $b_{j} + \sum_{i=1}^{w_1}  (\a_j)_i [R_{1i}]$ are structurally such that its rows and columns can be split into at most $O(w_1(mW))$ groups each such that each of these $w_2$ matrices can be decomposed as a sum of at most $O((w_1mW)^2)$ matrices of rank $1$. So we have,

\begin{align*}
\text{rank} \left ( b_{j} + \sum_{i=1}^{w_1}  (\a_j)_i \max \{0, b_{i} + \langle \a_i , \x \rangle \} \right ) = O((w_1mW)^2)\\
\implies \text{rank} \left ( \max \{0,b_{j} + \sum_{i=1}^{w_1}  (\a_j)_i \max \{0, b_{i} + \langle \a_i , \x \rangle \} \} \right ) = O((w_1mW)^2)
\end{align*}
~\\
The second equality follows because max can only flip some of the constant value blocks of $b_{j} + \sum_{i=1}^{w_1}  (\a_j)_i [R_{1i}]$ to $0$ and that cannot increase the rank. So finally we have, 

\[ \text{rank} \left ( \beta + \sum_{j=1}^{w_2} \alpha_j \max \left \{0, b_{j} + \sum_{i=1}^{w_1}  (\a_j)_i [R_{1i}] \} \right \} \right ) = O(w_2(w_1mW)^2))\]
~\\
In terms of the sign-rank this implies,

\begin{align*}
\text{sign-rank} \left ( \text{Th} \left [ \beta + \sum_{j=1}^{w_2} \alpha_j \max \{0, b_{j} + \langle \a_j , \y \rangle \} \right] \right ) &=  \text{sign-rank} \left ( \text{sign} \left [ \beta + \sum_{j=1}^{w_2} \alpha_j \max \{0, b_{j} + \langle \a_j , \y \rangle \} \right] \right )\\
&= \text{rank} \left ( \beta + \sum_{j=1}^{w_2} \alpha_j \max \left \{0, b_{j} + \sum_{i=1}^{w_1}  (\a_j)_i [R_{1i}] \} \right \} \right )\\
&= O(w_2(w_1mW)^2))
\end{align*}

{\color{red} One needs to complete the proof that the above bound on sign-rank, combined with Theorem~\ref{thm:arkadev-nikhil} gives a lower bound on $w_1 + w_2$. I dont immediately see what this lower bound is.}
\newpage 

\section{Polynomial upperbound on the sign-rank of Boolean valued ReLU-of-LTF (Proof of Theorem \ref{RLTF})}

Let us consider ReLU-of-q-Ths functions mapping as some $g : \{0, 1\}^{2m} \rightarrow \{0,1\}$. These are clearly not the generic ReLU-of-q-Ths circuits but clearly this is the relevant subclass when asking this class of circuits to represent Boolean functions mapping $\{0, 1\}^{2m} \rightarrow \{0,1\}$. To facilitate defining a notion of sign-rank for these we define maps, $f : \{0,1\}^{2m} \rightarrow \{-1,1\}$ as $f(\x) = (-1)^{g(\x)}$. Let the top ReLU gate of $g$ be of the form $\max \{0, b + \langle \a , \y \rangle \}$ where $\y \in \{-1,1\}^q$ is the output of the bottom layer of $q$ Th gates. And we would have $b \in \R$ and $\a \in \R^q$. Clearly we would have, $f = -1$ if $b + \langle \a , \y \rangle = 1$ and $f = 1$ when $b + \langle \a , \y \rangle = 0$. So we can as well think of creating this $f$ from the $g$ by replacing the top ReLU gate computing $\max \{0, b + \langle \a , \y \rangle \}$ by Th$[ \frac{1}{2} - (b + \langle \a , \y \rangle)]$ = Th$[(-b + \frac 1 2) + \langle -\a, \y \rangle] = \textrm{sign} [F(\x)]$ where $F(\x) = (-b + \frac 1 2) + \langle -\a, \y \rangle$. Then we clearly have, 
\[\text{sign-rank}(f) \leq \text{rank}(F) \leq 1 + \sum_{i=1}^q \text{rank}(y_i) \]

\bigskip
\begin{center}
\fbox{
{\bf 
\begin{minipage}{43em}
Now I guess Lemma $4.26$ of Lokam's book applies to the $y_i$ above seen as Threshold gates mapping, $\{0,1\}^m \times \{0,1\}^m \rightarrow \{-1,1\}$ with any way of partitioning the $2m$-bit input in this way. Then for the weights in $y_i$ being restricted to be integers with magnitude at most $W$ we have that, $\text{rank}(y_i) \leq 1 + mW$. Then we would get that for such $(-1)^{\text{ReLU-of-q-Ths}}$ we have,
\[ \text{sign-rank}((-1)^{\text{ReLU-of-q-Ths}}) \leq 1 + q(1+mW) = (1 +q) + qmW\] 
\end{minipage}}}
\end{center}

\newpage 

\section{The original possibly wrong counting of LTfofrelu}

{\color{blue}
{\bf (This blue part is most likely WRONG!Ignore!)}  
\[ \geq  1 - 2^{O(2 \lfloor \log (\frac{n}{2} ) \rfloor^2s2^s)}e^{-\left ( \frac{\epsilon^2}{9} \right ) 2^{1+\lfloor \log (\frac{n}{2} ) \rfloor} } \geq 1 - 2^{O(2k^2s2^s)}e^{-\frac{2\epsilon^2}{9}2^k}  \]
whereby in the last inequality above we have assumed that $n=2^{k+1}$. This assumption is legitimate because we want to estimate certain large $n$ asymptotics. Assume a constant $C < \frac{2}{9}$ such that we have, $O(2k^2s2^s)-\frac{2\epsilon^2 2^k}{9} \leq -C\epsilon^22^k \implies s2^s \leq O \left ( \frac {\epsilon^2 2^k}{k^2} \right )$. Any $s(n,\epsilon)$ that satisfies this inequality is also going to satisfy, $2^{2s} \leq O \left ( \frac {\epsilon^2 2^k}{k^2} \right ) \implies s \leq O(\frac {\epsilon^2 2^k}{k^2})$. So for this range of $s$ we have, $2^{O(2k^2s2^s)}e^{-\frac{2\epsilon^2}{9}2^k} \leq  e^{O(2k^2s2^s)-\frac{2\epsilon^2 2^k}{9}}\leq 2^{-C\epsilon^22^k}$. Now we want, $2^{-C\epsilon^22^k} \leq \frac{\epsilon}{3}$. But on the other hand for the upperbound on $s$ to make sense we need, $\epsilon^2 \geq \frac{k^2}{2^k} = \frac{2\log^2 \left ( \frac {n}{2} \right)}{n}$. So its clear that both the conditions on $\epsilon$ get satisfied for asymptotically large $n$ if one chooses, $\epsilon > \sqrt{\frac{2}{n}}\log \left ( \frac{n}{2} \right )$. Correspondingly for $s(n,\epsilon) = O(\frac{\epsilon^2 n}{2\log^2 \left ( \frac {n}{2} \right)})$, we have, 

\begin{align}\label{s}
\nonumber \mathbb{P}_{\substack{f \sim \{0,1\}^{\lfloor \log (\frac{n}{2} ) \rfloor} \rightarrow \{0,1\} }} & \Bigg [\forall D \text{ s.t } \vert D \vert \leq s(n,\epsilon)\text{ }\vert \mathbb{P}_{ \y \sim \{0,1\}^{\lfloor \log (\frac{n}{2} ) \rfloor}} [ f(\y) = D(\y) ]  \leq \Big (\frac {1}{2} + \frac {\epsilon}{3} \Big ) \Bigg ]\\
&\geq  1 - \frac{\epsilon}{3} 
\end{align}
}

\section{A summary of the relations between different combinations of LTF and ReLU gates}
In what follows all polynomials and exponentials are implicitly in terms of the input dimensions. 
\begin{definition}
~\\
\begin{itemize}
\item Let Th-of-Th/ReLU-of-Th/Th-of-ReLU/ReLU-of-ReLU be the classes of polynomial size Boolean circuits computing a Threshold/ReLU/Th/ReLU function of a layer of Threshold/Threshold/ReLU/ReLU gates respectively. Whenever the output gate is a ReLU it is to be understood that we are looking at only a subclass of that architecture which maps $\{-1,1\}^n \rightarrow \{0,1\}$. Otherwise all circuits are mapping $\{-1,-1\}^n \rightarrow \{-1,1\}$.   
\item Corresponding to Th-of-Th above we define, $\text{Th-of-Th}_{W}$ and similarly for others the polynomial size circuit classes corresponding to the above where the bottom layer is restricted to have integral weights bounded by $W$. 
\item A $k-$DNN-Th is to be always understood as a $\{-1,1\}^n \rightarrow \{-1,1\}$ function with $k$ layers of ReLU gates and a final Th gate. 
\item A $k-$DNN-Sum is to be always understood as a $\{-1,1\}^n \rightarrow \R$ function with $k$ layers of ReLU gates and a final summing (over the reals) gate.
\end{itemize}
\end{definition}
~\\
Just from the definition we have, 
\begin{claim}
$\forall W \in \R$, 
Th-of-Th$_W \subseteq$ Th-of-Th, Th-of-ReLU$_W \subseteq$ Th-of-ReLU, ReLU-of-Th$_W \subseteq$ ReLU-of-Th and ReLU-of-ReLU$_W \subseteq$ ReLU-of-ReLU 
\end{claim}
~\\ 
Because any Th gate over the Booleans can be replaced by $2$ ReLU gates we have,
\begin{claim}
Depth $k$ $AC^0$ $\subseteq$ Th$_k$ $\subseteq$ $k-$DNN-Sum
\end{claim}
\begin{claim}
$k-$DNN-Th $\subseteq$ $(k+1)-$DNN-Sum  
\end{claim}
\newpage 
~\\
Specifically for the circuit classes with at most $2$ layers fof ReLU or Th gates we have the following $2$ chains of containment,
~\\
\begin{claim}
~\\
$\forall W \in \R$, \\ 
\begin{itemize}
\item {\bf (Chain $1$)} Th-of-Th$_W \subset$ Th-of-Th $\subseteq$ Th-of-ReLU $=$ $1-$DNN-Th 
\item {\bf (Chain $2$)} ReLU-of-Th$_W \subseteq$ ReLU-of-ReLU$_W \subseteq$ ReLU-of-ReLU $\subseteq$ $2-$DNN-Sum
\item {(\bf Chain $3$)} ReLU-of-Th$_W$ $\subseteq$ ReLU-of-Th $\subseteq$  ReLU-of-ReLU $\subseteq$ $2-$DNN-Sum
\item {(\bf Chain $4$)} Sum-of-Th $\subseteq$ $1-$DNN-Sum $\subseteq$ $2-$DNN-Sum 
\end{itemize}
\end{claim}
~\\
As of now we have a few containment relationships from {\bf Chain $2$} to {\bf Chain $1$},
~\\
\begin{claim}
~\\
\begin{itemize}
\item $(-1)^{\text{ReLU-of-ReLU}_W} \subseteq$ Th-of-ReLU$_W \subseteq$ Th-of-ReLU
\item $(-1)^{\text{ReLU-of-Th}_W} \subseteq$ Th-of-Th$_W$\\
\item $(-1)^{\text{ReLU-of-ReLU}}\subseteq$ Th-of-ReLU
\end{itemize}
\end{claim}

~\\
{\bf Except for the first containment in the first chain above, curiously enough we do not yet know of any other strict containment anywhere in these three chains of containments!}
~\\ \\
But from the exponential lowerbound in sign-rank for the Arkadev-Nikhil function or the Minsky-Pappert function (which are members of Th-of-Th) and our sign-rank upperbound for Th-of-ReLU$_W$  we have two inequalities at the bottom levels of the two containment chains i.e

~\\ \\ 
{\bf IMPORTANT : We need to $(1)$ Does Th-of-ReLU$_W$ sit anywhere in the first chain? $(A)$ resolve as to for how large a $W$ is Th-of-ReLU$_W$ and $(-1)^{\text{ ReLU-of-ReLU}_W}$ (or at least $(-1)^{\text{ ReLU-of-Th}_W}$) a subset of Th-of-Th or at least a subset of Th-Of-ReLU and $(B)$ find hard functions in the second  and the third chain and $(C)$ explore the possibility that Th-of-ReLU$_W$ could be a strict subset of $(-1)^{\text{ReLU-of-ReLU}}$ and $(D)$ Any of these weight unrestricted containments could be strict, $(-1)^{\text{ReLU-of-ReLU}}\subseteq$ Th-of-ReLU,Th-of-Th $\subseteq$ Th-of-ReLU, ReLU-of-Th $\subseteq$ ReLU-of-ReLU}

\section {Integral weights are enough.}

Given $w_1,\ldots,w_n,t$, let $X$ be the set of assignments such that $\sum_i w_i x_i \geq t$, and consider the linear program with variables $z_1,\ldots,z_n$ and the $2^n$ constraints

\begin{align*}
\sum_i z_i x_i \geq +1 & (x_1,\ldots,x_n) \in X \\
\sum_i z_i x_i \leq -1 & (x_1,\ldots,x_n) \notin X
\end{align*}

This linear program (with a constant objective function) is feasible (since $z_i = w_i/t$ is a solution), and so has a solution $z_1,\ldots,z_n$ which is a vertex. As such, it is the solution of a system of $n$ equations of the form

\[ \sum_i z_i x_i = \pm  1 \] 

Cramer's rule gives each $z_i$ as a ratio of two $n \times n$ determinants of matrices of $0,\pm 1$ entries; in fact the denominator is the same. Each numerator $N_i$ and the common denominator $D$ is at most $n!$ in magnitude, so multiplying everything by $D$, we get an integral solution $N_1,\ldots,N_n,D$, where all the quantities are at most $n! \leq n^n = 2^{n\log n}$ in magnitude.

\section{Some open questions about upper bounds with ReLU nets}
\subsection {Using ReLU gates does PARITY have a depth $2$ circuit representing it or $\epsilon-$approximating it in size $o(\sqrt{n})$?}

\subsection {Using ReLU gates does polynomial sized AND-of-OR-of-AND have a depth $2$  circuit of size $o(2^{\log^4 n})$?}
~\\ 
AND of $n$ 0/1 bits can be taken by $\max \{ 0, 1+(x_1,...,x_n) - n \}$ and similarly for $OR$. Thus poly-sized constant depth AND-OR circuit is representable by a circuit with only ReLU gates of the same size and with weights bounded as $O(n)$. Whereas for depth $3$ $AC^0$ the size upperbound for depth $3$ circuits with only MAJORITY gates is $2^{O(log^4 n)}$ which is asymptotically larger than $\poly(n)$. 
~\\ \\
{\bf But  (with or without weight bound) can depth $3$ ReLU circuits represent depth $4$ $AC^0$ circuits in size $o(2^{O(log^5 n)})$?}

\subsection{Does the Andreev function have a $o(n^3)$ representation as a LTF-of-ReLU circuit?}

\end{document}